\newtheorem{theorem}{Theorem}
\newtheorem{lemma}{Lemma}
\newtheorem{proposition}{Proposition}
\newtheorem{corollary}{Corollary}
\theoremstyle{definition}
\newtheorem{definition}[theorem]{Definition}
\newcommand{\Qs}[1]{\ensuremath{Q^{*}(#1)}}
\newcommand{\Free}{\ensuremath{\mathrm{Free}}}
\tikzstyle{vertex}=[circle,draw, fill,minimum size=6pt,inner sep=0pt]
\tikzstyle{edge} = [draw,thick,-]
\tikzstyle{weight} = [fill=blue!3,font=\scriptsize]
\date{}
\title{Upper Domination: \\ towards a dichotomy through boundary properties\thanks{\tiny The results of this paper previously appeared as 
extended abstracts in proceedings of the 8th International Conference on Combinatorial Optimization and Applications,  COCOA 2014 \cite{monogenic}
and the 27th International Workshop on Combinatorial Algorithms, IWOCA 2016 \cite{iwoca}.}}
\author{Hassan AbouEisha\thanks{\tiny King Abdullah University of Science and Technology, Thuwal, Saudia Arabia. Email: hassan.aboueisha@kaust.edu.sa} \hspace{3mm}
Shahid Hussain\thanks{\tiny Habib University, Karachi, Pakistan. Email: shahid.hussain@sse.habib.edu.pk} \hspace{3mm}
Vadim Lozin\thanks{\tiny Mathematics Institute, University of Warwick, Coventry, CV4 7AL, UK. Email: V.Lozin@warwick.ac.uk 
} \\
J\'er\^ome Monnot\thanks{\tiny Universit\'e~Paris-Dauphine,~PSL~Research~University,~CNRS,~LAMSADE,~75016~Paris,~France.~Email:~jerome.monnot@dauphine.fr} \hspace{4mm}
Bernard Ries\thanks{\tiny University~of~Fribourg,~Department~of~Informatics,~Bd~de~P\'erolles~90,~1700~Fribourg,~Switzerland.~Email:~bernard.ries@unifr.ch} \hspace{4mm}
Viktor Zamaraev\thanks{\tiny Mathematics Institute, University of Warwick, Coventry, CV4 7AL, UK. Email: V.Zamaraev@warwick.ac.uk}
}
\begin{document}
\maketitle

\begin{abstract}
An upper dominating set in a graph is a minimal (with respect to set inclusion) dominating set of maximum cardinality. 
The problem of finding an upper dominating set is generally NP-hard. We study the complexity of this problem in 
classes of graphs defined by  finitely many forbidden induced subgraphs and conjecture that the problem admits 
a dichotomy in this family, i.e. it is either NP-hard or polynomial-time solvable for each class in the family.
A helpful tool to study the complexity of an algorithmic problem on finitely defined classes of graphs is the notion of 
boundary classes. However, none of such classes has been identified so far for the upper dominating set problem.  
In the present paper, we discover the first boundary class for this problem and prove the dichotomy for classes 
defined by a single forbidden induced subgraph. 
\end{abstract}

\section{Introduction}

In a graph $G=(V,E)$, a {\it dominating set} is a subset of vertices $D\subseteq V$ such that any vertex
outside of $D$ has a neighbour in $D$. A dominating set $D$ is {\it minimal} if no proper subset of $D$ 
is dominating. An {\it upper dominating set} is a minimal dominating set of maximum cardinality. 
The {\sc upper dominating set} problem (i.e. the problem of finding an upper dominating set in a graph) is generally NP-hard \cite{ChestonFHJ90}. 
Moreover, the problem is difficult from a parameterized perspective (it is W[2]-hard \cite{Jerome1}) and from an approximation point of view
(for any $\varepsilon> 0$, the problem is not $n^{1-\varepsilon}$-approximable, unless $P=NP$ \cite{Jerome2}). 
On the other hand, in some particular graph classes the problem can be solved in polynomial time, which is the case for 
bipartite graphs \cite{CockayneFPT81}, chordal graphs \cite{JacobsonP90}, generalized series-parallel graphs \cite{HareHLPPW87}, graphs of bounded clique-width \cite{Courcelle}, etc.
We contribute to this topic in several ways.

First, we prove two new NP-hardness results: for complements of bipartite graphs and for planar graphs of vertex degree at most 6 and girth at least 6.
This leads to a complete dichotomy for this problem in the family of minor-closed graph classes. 
Indeed, if a minor-closed class $X$ contains all planar graphs, then the problem is NP-hard in $X$.
Otherwise, graphs in $X$ have bounded tree-with \cite{excluding} (and hence bounded clique-width), in which case the problem can be solved in polynomial time.
Whether this dichotomy can be extended to the family of all hereditary classes is a challenging open question. 
We conjecture that the classes defined by finitely many forbidden induced subgraphs admit such a dichotomy and prove 
several results towards this goal. For this, we employ the notion of boundary classes that has been recently introduced 
to study algorithmic graph problems. The importance of this notion is due to the fact that an algorithmic problem $\Pi$ 
is NP-hard in a class $X$ defined by finitely many forbidden induced subgraphs if and only if $X$ contains a boundary class for $\Pi$.
Unfortunately, no boundary class is known for the {\sc upper dominating set} problem. In the present paper, we unveil this uncertainty 
by discovering the first boundary class for the problem. We also develop a polynomial-time algorithm for {\sc upper domination}
in the class of $2K_2$-free graphs. Combining various results of the paper, we prove that the dichotomy holds for classes 
defined by a single forbidden induced subgraph.

The organization of the paper is as follows. In Section~\ref{sec:pre}, we introduce basic definitions, including the notion of a boundary class, 
and prove some preliminary results. In Section~\ref{sec:NP}, we prove two NP-hardness results. 
Section~\ref{sec:boundary} is devoted to the first boundary class for the problem.  In Section~\ref{sec:poly},
we establish the dichotomy for classes defined by a single forbidden induced subgraph.   Finally, Section~\ref{sec:con} concludes the paper with a number of open problems. 
  
\section{Preliminaries}
\label{sec:pre}

We denote by $\cal G$ the set of all simple graphs, i.e. undirected graphs without loops and multiple edges.
The {\it girth} of a graph $G\in {\cal G}$ is the length of a shortest cycle in $G$.
As usual, $K_n$, $P_n$ and $C_n$ stand for the complete graph, the chordless path and the chordless cycle with $n$ vertices, respectively.
Also, $\overline{G}$ denotes the complement of $G$, and $2K_2$ is the disjoint union of two copies of $K_2$.
A {\it star} is a connected graph in which all edges are incident to the same vertex, called the \textit{center} of the star. 

Let $G=(V,E)$ be a graph with vertex set $V$ and edge set $E$, and let $u$ and $v$ be two vertices of $G$. 
If  $u$ is adjacent to  $v$, we write $uv\in E$ and say that $u$ and $v$ are \textit{neighbours}.
The \textit{neighbourhood} of a vertex $v\in V$ is the set of its neighbours; it is denoted by $N(v)$.
The \textit{degree} of $v$ is the size of its neighbourhood. 
If the degree of each vertex of $G$ equals 3, then $G$ is called {\it cubic}.

A subgraph of $G$ is {\it spanning} if it contains all vertices of $G$, and it is 
{\it induced} if two vertices of the subgraph are adjacent if and only if they are adjacent in $G$.
If a graph $H$ is isomorphic to an induced subgraph of a graph $G$, we say that~$G$ contains~$H$. 
Otherwise we say that $G$ is $H$-free. Given a set of graphs $M$, we denote by $\Free(M)$ the set of 
all graphs containing no induced subgraphs from $M$.

A class of graphs (or graph property) is a set of graphs closed under isomorphism.
A class is {\it hereditary} if it is closed under taking induced subgraphs. It is well-known 
(and not difficult to see) that a class $X$ is hereditary if and only if $X=\Free(M)$ for some set $M$.
If $M$ is a finite set, we say that $X$ is {\it finitely defined}, and if $M$ consists of a single graph, 
then $X$ is {\it monogenic}. 

A class of graphs is {\it monotone} if it is closed under taking subgraphs (not necessarily induced).
Clearly, every monotone class is hereditary.

In a graph, a {\it clique} is a subset of pairwise adjacent vertices, and 
an {\it independent set} is a subset of vertices no two of which are adjacent.
A graph is {\it bipartite} if its vertices can be partitioned into two independent sets.
It is well-known that a graph is bipartite if and only if it is free of odd cycles, 
i.e. if and only if it belongs to $\Free(C_3,C_5,C_7,\ldots)$.  
We say that a graph $G$ is {\it co-bipartite} if $\overline{G}$ is bipartite. 
Clearly, a graph is co-bipartite if and only if it belongs to $\Free(\overline{C}_3,\overline{C}_5,\overline{C}_7,\ldots)$.

We say that an independent set $I$ is {\it maximal} if no other independent set properly 
contains $I$. The following simple lemma connects the notion of a maximal independent set 
and that of a minimal dominating set. 

\begin{lemma}\label{lem:mm}
Every maximal independent set is a minimal dominating set.
\end{lemma}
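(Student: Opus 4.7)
The plan is to verify the two defining properties of a minimal dominating set for an arbitrary maximal independent set $I$ in a graph $G=(V,E)$: first that $I$ dominates $V$, and second that no proper subset of $I$ dominates $V$.

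For domination, I would take any vertex $v \in V \setminus I$ and argue that $v$ must have a neighbour in $I$. The key observation here is maximality: since $I$ is a maximal independent set, the set $I \cup \{v\}$ fails to be independent, and because $I$ itself is independent, the only possible edge in $I \cup \{v\}$ violating independence must have $v$ as an endpoint and its other endpoint in $I$. This gives the required neighbour in $I$, so every vertex outside $I$ is dominated.

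For minimality, I would fix an arbitrary $u \in I$ and show that $I \setminus \{u\}$ fails to dominate $u$. Since $I$ is independent, $u$ has no neighbours inside $I$, and therefore no neighbours inside $I \setminus \{u\}$; moreover $u$ itself lies outside $I \setminus \{u\}$. Hence $u$ is not dominated by $I \setminus \{u\}$, so no proper subset of $I$ is dominating.

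I do not anticipate any genuine obstacle here, since the argument is a direct unpacking of the definitions of \emph{maximal independent set}, \emph{dominating set}, and \emph{minimal dominating set}. The only point worth stating cleanly is the one place where both independence and maximality are used simultaneously: independence is what forces the non-neighbour status inside $I$ (used for minimality), while maximality is what forces the neighbour status for vertices outside $I$ (used for domination).
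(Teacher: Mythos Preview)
Your proposal is correct and follows exactly the same argument as the paper: domination comes from maximality (any $v\notin I$ must have a neighbour in $I$), and minimality comes from independence (removing $u\in I$ leaves $u$ undominated). The only difference is that you spell out the details slightly more than the paper does.
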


\begin{proof}
Let $G=(V,E)$ be a graph and let $I$ be a maximal independent set in~$G$. 
Then every vertex $u\not\in I$ has a neighbour in $I$ (else $I$ is not maximal) and hence~$I$ is dominating. 

The removal of any vertex $u\in I$ from $I$ leaves $u$ undominated.
Therefore, $I$ is a minimal dominating set.  
\end{proof}

\begin{definition}
Given a dominating set $D$ and a vertex $x\in D$, we say that a vertex $y\not\in D$ is a {\it private}
neighbour of $x$ if $x$ is the only neighbour of $y$ in $D$.
\end{definition}


\begin{lemma}\label{lem:1}
Let $D$ be a minimal dominating set in a graph $G$. If a vertex $x\in D$ has a neighbour in $D$, then 
it also has a private neighbour outside of $D$.
\end{lemma}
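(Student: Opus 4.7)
The plan is to prove Lemma~\ref{lem:1} by contradiction. Suppose, for the sake of contradiction, that $x \in D$ has a neighbour $y \in D$ but no private neighbour outside of $D$. I will exhibit a proper subset of $D$ that is still dominating, contradicting the minimality of $D$.

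The natural candidate is $D' = D \setminus \{x\}$. I would verify that $D'$ dominates every vertex of $G$ by splitting into cases. First, the vertex $x$ itself, now outside $D'$, is dominated by $y \in D'$ (note $y \neq x$ and $y$ remains in $D'$). Second, for any vertex $u \notin D$ with $u \neq x$, I consider two subcases depending on whether $u$ is adjacent to $x$. If $u$ is not adjacent to $x$, then any neighbour of $u$ in $D$ (which exists since $D$ dominates $u$) must lie in $D \setminus \{x\} = D'$. If $u$ is adjacent to $x$, then since $x$ has no private neighbour outside of $D$, the vertex $u$ must have another neighbour in $D$ besides $x$, and this neighbour lies in $D'$.

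Thus $D'$ is a dominating set strictly smaller than $D$, contradicting the assumption that $D$ is a minimal dominating set. This forces $x$ to possess a private neighbour outside of $D$.

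I do not anticipate any real obstacle here: the statement is essentially a direct unpacking of the definitions of \emph{minimal dominating set} and \emph{private neighbour}. The only subtlety is to be careful about the case $u = x$, which is handled separately using the hypothesis that $x$ has a neighbour inside $D$ (this is exactly where the hypothesis is used).
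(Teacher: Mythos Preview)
Your proof is correct and follows exactly the same approach as the paper: assume no private neighbour exists and show $D\setminus\{x\}$ is still dominating, contradicting minimality. The paper's version is terser (one sentence), while you spell out the case analysis explicitly, but the argument is identical.
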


\begin{proof}
If a vertex $x\in D$ is adjacent to a vertex in $D$ and has no private neighbour outside of $D$,
then $D$ is not minimal, because the set $D-\{x\}$ is also dominating. 
\end{proof}

\begin{lemma}\label{lem:2}
Let $G$ be a connected graph and $D$ a minimal dominating set in $G$. If there are vertices in $D$ 
that have no private neighbour outside of $D$, then $D$ can be transformed in polynomial time
into a minimal dominating set $D'$ with $|D'|\le |D|$ in which every vertex has a private neighbour outside of $D'$.
\end{lemma}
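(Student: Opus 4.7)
My plan is an iterative algorithm that performs a \emph{swap-and-reduce} step on $D$ as long as there is a vertex of $D$ without a private neighbour outside $D$. Let $B(D)$ denote the set of such ``bad'' vertices. Each outer iteration goes as follows. Pick any $x \in B(D)$. By Lemma \ref{lem:1}, $x$ has no neighbour in $D$, hence $N(x) \subseteq V \setminus D$; since $G$ is connected with at least two vertices, we may pick some $y \in N(x)$. The vertex $y$ cannot be private to $x$, so it has some $z \in (N(y) \cap D) \setminus \{x\}$. Set $D_0 := (D \setminus \{x\}) \cup \{y\}$, which has the same cardinality as $D$. A direct check shows $D_0$ is dominating: $x$ is dominated by $y$, every external neighbour of $x$ has a second dominator in $D \setminus \{x\}$ (otherwise it would be private to $x$), and all remaining vertices are still dominated by $D \setminus \{x\} \subseteq D_0$. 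Finally, reduce $D_0$ to a minimal dominating subset $D_1 \subseteq D_0$ by iteratively deleting any vertex that has a neighbour inside the current set but no private neighbour outside it. Replace $D$ by $D_1$ and iterate.

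This procedure preserves the invariant that $D$ is a minimal dominating set of size at most $|D_{\mathrm{init}}|$, so on termination the resulting $D'$ satisfies all requirements. To bound the running time, I introduce the potential
\[
\Phi(D) \;:=\; n^2 \cdot |D| - |E(G[D])|, \qquad n := |V|,
\]
a positive integer bounded above by $n^3$. I claim $\Phi$ strictly decreases at every outer iteration. If the reduction removes at least one vertex, then $|D_1| \leq |D|-1$ and $\Phi(D_1) - \Phi(D) \leq -n^2 + \binom{n}{2} < 0$. Otherwise $D_0$ is already minimal, so $D_1 = D_0$ and $|D_1| = |D|$; using $N(x) \cap D = \emptyset$, one computes
\[
|E(G[D_0])| - |E(G[D])| \;=\; |N(y) \cap (D \setminus \{x\})| \;\geq\; 1,
\]
the last inequality because $z$ belongs to this set. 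Hence $\Phi$ drops by at least $1$ in this case as well. Since $\Phi$ is integer-valued and bounded by a polynomial in $n$, the outer loop runs at most $O(n^3)$ times, and each iteration is plainly polynomial.

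The delicate point is the choice of potential. The natural candidates $|D|$ or $|B(D)|$ need not decrease between iterations, because a swap can \emph{create} new members of $B$: this happens, for instance, when some $v \in D \setminus \{x\}$ has all of its private neighbours in $N(y)$, so that inserting $y$ renders each of them doubly dominated and $v$ loses its witness. The remedy is to use the number of edges inside $D$ as a tie-breaker: the new vertex $y$ always contributes at least one fresh edge $yz$ to $E(G[D])$, and this pays for the iterations in which $|D|$ remains unchanged.
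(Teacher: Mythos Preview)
Your proof is correct and follows essentially the same approach as the paper's: the identical swap $(D\setminus\{x\})\cup\{y\}$ followed by reduction to a minimal dominating set, iterated. The only difference is the termination argument---the paper tracks the number of isolated vertices of $G[D]$ (which drops whenever the swap yields an already-minimal set, since $x$ was isolated while $y$ is adjacent to $z$) and asserts a bound of $|V(G)|$ steps, whereas your potential $\Phi(D)=n^{2}|D|-|E(G[D])|$ makes the polynomial bound fully explicit.
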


\begin{proof}
Assume $D$ contains a vertex $x$ which has no private neighbours outside of $D$. Then $x$ is isolated in $D$
(i.e. it has no neighbours in $D$) by Lemma~\ref{lem:1}. On the other hand, since $G$ is connected, $x$ must have a neighbour $y$ outside of $D$.
As $y$ is not a private neighbour of $x$, it is adjacent to a vertex $z$ in $D$.
Consider now the set $D_0=(D-\{x\})\cup\{y\}$. Clearly, it is a dominating set.
If it is a minimal dominating set in which every vertex has a private neighbour outside of the set, then we are done.
Otherwise, it is either not minimal, in which case we can reduce its size by deleting some vertices,
or it has strictly fewer isolated vertices than $D$. Therefore, by iterating the procedure, in at most $|V(G)|$ 
steps we can transform  $D$ into a minimal dominating set $D'$ with $|D'|\le |D|$ in which every vertex has a private neighbour outside of the set. 
\end{proof}

\subsection{Boundary classes of graphs}
\label{sec:b-classes}

The notion of boundary classes of graphs was introduced in \cite{Ale03} to study the {\sc maximum independent set}
problem in hereditary classes. Later this notion was applied to some other  
problems of both algorithmic \cite{AKL04,ABKL07,KLMT11,LP13} and combinatorial \cite{KLR13,Loz08,LZ15} nature.
Assuming $P\ne NP$, the notion of boundary classes can be defined, with respect to algorithmic graph problems, as follows. 

\medskip
Let $\Pi$ be an algorithmic graph problem, which is generally NP-hard. We will say that a hereditary class $X$ of graphs
is $\Pi$-{\it tough} if the problem is NP-hard for graphs in $X$ and $\Pi$-{\it easy}, otherwise. 
We define the notion of a boundary class for $\Pi$ in two steps. First, let us define the notion of a limit class.

\begin{definition}
A hereditary class $X$ is a {\it limit class} for $\Pi$ if $X$ is the intersection of a sequence $X_1\supseteq X_2\supseteq X_3\supseteq \ldots$
of $\Pi$-tough classes, in which case we also say that the sequence {\it converges} to $X$.
\end{definition} 

\begin{itemize}
\item[]
{\it Example}. To illustrate the notion of a limit class, let us quote a result from 
\cite{Mur} stating that the {\sc maximum independent set} problem is NP-hard for graphs with large girth, i.e. for $(C_3,C_4,\ldots,C_k)$-free graphs 
for each fixed value of $k$. With $k$ tending to infinity, this sequence converges to the class of graphs without cycles, i.e. to forests.
Therefore, the class of forests is a limit class for the {\sc maximum independent set} problem.  However, this is not a minimal limit class for 
the problem, which can be explained as follows. 

The proof of the NP-hardness of the problem for graphs with large girth is based on a simple fact 
that a double subdivision of an edge in a graph $G$ increases the size of a maximum independent set in $G$ by exactly 1. This operation applied 
sufficiently many (but still polynomially many)  times allows to destroy all small cycles in $G$, i.e. reduces the problem from an arbitrary graph $G$
to a graph $G'$ of girth at least $k$. Obviously, if $G$ is a graph of vertex degree at most 3, then so is $G'$, and since the problem is NP-hard
for graphs of degree at most 3, we conclude that it is also NP-hard for for $(C_3,C_4,\ldots,C_k)$-free graphs of degree at most 3. This shows 
that the class of forests of vertex degree at most 3 is a limit class for the the {\sc maximum independent set} problem. However, it is still 
not a minimal limit class, because by the same operation (double subdivisions of edges) one can destroy small induced copies of the graph $H_n$
shown on the left of Figure~\ref{fig:SH}. 
Therefore, the {\sc maximum independent set} problem is NP-hard in the following class for each fixed value of $k$: 

\medskip
\begin{itemize}
\item[$Z_k$] is the class of  $(C_3,\ldots,C_k,H_1,\ldots,H_k)$-free graphs of degree at most 3.
\end{itemize}
\medskip

It is not difficult to see that the sequence $Z_3\supset Z_4\supset\ldots$ converges to the class of forests every connected component of which 
has the form $S_{i,j,\ell}$ represented on the right of Figure~\ref{fig:SH}, also known as {\it tripods}. Throughout the paper we denote this class by $\cal S$, i.e.

\medskip
\begin{itemize}
\item[$\cal S$] is the intersection of the sequence $Z_3\supset Z_4\supset\ldots$.
\end{itemize}
\medskip

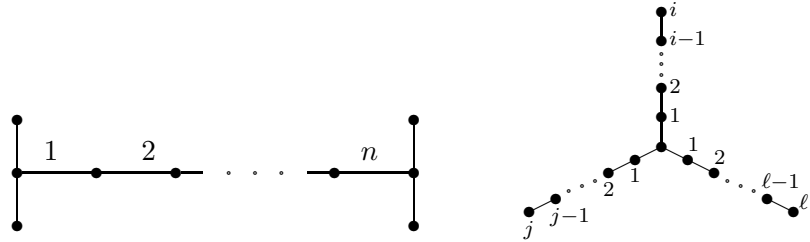
\begin{figure}[ht]
\begin{center}
\begin{picture}(170,70)
\put(-10,25){\circle*{4}} 
\put(20,25){\circle*{4}}
\put(50,25){\circle*{4}} 
\put(70,25){\circle{1}}
\put(80,25){\circle{1}} 
\put(90,25){\circle{1}}
\put(110,25){\circle*{4}} 
\put(140,25){\circle*{4}}
\put(-10,45){\circle*{4}} 
\put(-10,5){\circle*{4}}
\put(140,45){\circle*{4}} 
\put(140,5){\circle*{4}}
\put(-8,25){\line(1,0){26}} 
\put(22,25){\line(1,0){26}}
\put(52,25){\line(1,0){8}} 
\put(100,25){\line(1,0){8}}
\put(112,25){\line(1,0){26}} 
\put(-10,27){\line(0,1){16}}
\put(-10,23){\line(0,-1){16}} 
\put(140,27){\line(0,1){16}}
\put(140,23){\line(0,-1){16}} 
\put(0,30){1} \put(37,30){2}
\put(120,30){$n$}
\end{picture}
\begin{picture}(120,80)
\put(60,35){\circle*{4}} \put(60,46){\circle*{4}}
\put(60,57){\circle*{4}} \put(60,75){\circle*{4}}
\put(60,86){\circle*{4}} \put(60,62){\circle{1}}
\put(60,66){\circle{1}} \put(60,70){\circle{1}}
\put(60,35){\line(0,1){11}} \put(60,46){\line(0,1){11}}
\put(60,75){\line(0,1){11}} \put(50,30){\circle*{4}}
\put(40,25){\circle*{4}} \put(20,15){\circle*{4}}
\put(10,10){\circle*{4}} \put(35,22){\circle{1}}
\put(30,20){\circle{1}} \put(25,18){\circle{1}}
\put(60,35){\line(-2,-1){10}} \put(50,30){\line(-2,-1){10}}
\put(20,15){\line(-2,-1){10}} \put(70,30){\circle*{4}}
\put(80,25){\circle*{4}} \put(100,15){\circle*{4}}
\put(110,10){\circle*{4}} \put(85,22){\circle{1}}
\put(90,20){\circle{1}} \put(95,18){\circle{1}}
\put(60,35){\line(2,-1){10}} \put(70,30){\line(2,-1){10}}
\put(100,15){\line(2,-1){10}}

\put(63,46){$_1$} \put(63,57){$_2$} \put(63,75){$_{i-1}$}
\put(63,86){$_i$} \put(48,23){$_1$} \put(38,18){$_2$}
\put(18,8){$_{j-1}$} \put(8,3){$_j$} \put(70,35){$_1$}
\put(80,30){$_2$} \put(98,21){$_{\ell-1}$} \put(112,13){$_\ell$}
\end{picture}
\end{center}
\caption{Graphs $H_n$ (left) and $S_{i,j,\ell}$ (right).} 
\label{fig:SH}
\end{figure}

The above discussion shows that $\cal S$ is a limit class for the {\sc maximum independent set} problem. Moreover, in \cite{Ale03}
it was proved that $S$ is a {\it minimal} limit class for this problem. 
\end{itemize}

\begin{definition}
A minimal (with respect to set inclusion) limit class for a problem $\Pi$ is called a {\it boundary class} for $\Pi$.
\end{definition} 

The importance of the notion of boundary classes for NP-hard algorithmic graph problems is due to the following theorem proved originally  
for the {\sc maximum independent set} problem in \cite{Ale03} (can also be found in \cite{AKL04} in a more general context). 

\begin{theorem}\label{thm:boundary}
If $\mathrm{P}\neq \mathrm{NP}$, then a finitely defined class $X$ is $\Pi$-tough
if and only if $X$ contains a boundary class for $\Pi$. 
\end{theorem}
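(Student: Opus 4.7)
The proof naturally splits into two directions, and I would handle them separately.

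For the $(\Leftarrow)$ direction, assume $X = \Free(M)$ with $M$ finite and $X$ contains a boundary class $Y = \bigcap_{i \ge 1} X_i$, where the $X_i$ form a decreasing sequence of $\Pi$-tough classes. My plan is to show that some $X_n$ is already contained in $X$, from which NP-hardness of $\Pi$ on $X$ follows, since any instance of $\Pi$ on $X_n \subseteq X$ is also an instance on $X$, so NP-hardness lifts from a subclass to any superclass. Suppose for contradiction no $X_n$ lies in $X$; then for each $n$ one can pick $G_n \in X_n \setminus X$, and since $G_n \notin \Free(M)$, the graph $G_n$ contains some $H \in M$ as an induced subgraph. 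Finiteness of $M$ combined with the pigeonhole principle yields a single $H \in M$ contained in infinitely many of the $G_n$. Heredity and nesting of the $X_i$'s then force $H \in X_i$ for every $i$, hence $H \in Y \subseteq X = \Free(M)$, contradicting $H \in M$.

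For the $(\Rightarrow)$ direction, note first that if $X$ is $\Pi$-tough then $X$ is itself a limit class via the constant sequence $X \supseteq X \supseteq \cdots$, so the family $\mathcal{L}_X$ of limit classes contained in $X$ is non-empty. My plan is to invoke Zorn's lemma on $\mathcal{L}_X$ ordered by reverse inclusion: a Zorn-maximal element is exactly a minimal limit class in $X$, which by definition is a boundary class inside $X$. The hypothesis to verify is that every descending chain $\{Y_\alpha\}$ in $\mathcal{L}_X$ has its intersection $Y = \bigcap_\alpha Y_\alpha$ still in $\mathcal{L}_X$, i.e. $Y$ is again a limit class.

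This last verification is where I expect the main technical obstacle. I would exploit the countability of the class of all graphs. First, enumerate $G_1, G_2, \ldots$ the graphs outside $Y$ and, for each $k$, pick $\alpha_k$ with $G_k \notin Y_{\alpha_k}$; using linearity of the chain one can refine the $\alpha_k$ to a countable decreasing subchain $Y_{\beta_1} \supseteq Y_{\beta_2} \supseteq \cdots$ satisfying $G_k \notin Y_{\beta_n}$ for $k \le n$, and whose intersection is still $Y$. Each $Y_{\beta_n}$ is itself an intersection $\bigcap_i T_{n,i}$ of $\Pi$-tough classes with $T_{n,1} \supseteq T_{n,2} \supseteq \cdots$, and the delicate task is to diagonalise this doubly indexed family into a single decreasing sequence $S_1 \supseteq S_2 \supseteq \cdots$ of $\Pi$-tough classes with $\bigcap_n S_n = Y$. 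A natural choice is $S_n = T_{n,k_n}$ with $k_n$ chosen large enough that $T_{n,k_n}$ excludes $G_1, \ldots, G_n$ and is nested inside $S_{n-1}$; ensuring this nested containment is the subtle point, and I would leverage the fact that $Y_{\beta_n} \subseteq Y_{\beta_{n-1}} \subseteq S_{n-1}$ together with the decreasing nature of $T_{n,i}$ in $i$ to pick $k_n$ sufficiently large. Once this weaving succeeds, $Y$ is displayed as a limit class, Zorn's lemma yields a minimal element of $\mathcal{L}_X$, and that element is the required boundary class in $X$.
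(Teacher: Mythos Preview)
First, note that the paper does not prove Theorem~\ref{thm:boundary}; it is imported from \cite{Ale03,AKL04}, and only the auxiliary Lemma~\ref{lem:limit} (essentially your $(\Leftarrow)$ direction) is singled out, again by citation. So there is no in-paper argument to compare against, and what follows concerns the internal soundness of your proposal.

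Your $(\Leftarrow)$ direction is correct and is precisely Lemma~\ref{lem:limit}. The $(\Rightarrow)$ strategy via Zorn's lemma is also the right one, but the verification of the chain condition has a real gap at the step you yourself flag as ``subtle'': from $\bigcap_i T_{n,i}=Y_{\beta_n}\subseteq S_{n-1}$ one \emph{cannot} in general conclude that $T_{n,k}\subseteq S_{n-1}$ for any $k$. For a concrete obstruction, take $S_{n-1}$ to be the class of bipartite graphs and $T_{n,i}$ the class of graphs of girth greater than $i$; each $T_{n,i}$ is $\Pi$-tough for, say, {\sc minimum dominating set}, their intersection (the forests) is contained in $S_{n-1}$, yet every $T_{n,i}$ contains the odd cycle $C_{2i+3}$ and hence is not a subclass of $S_{n-1}$. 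The clean repair avoids the doubly indexed diagonalisation altogether by first establishing the characterisation: a hereditary class $Y$ is a limit class if and only if every finitely defined superclass of $Y$ is $\Pi$-tough (one direction is Lemma~\ref{lem:limit}; for the other, enumerate the forbidden induced subgraphs of $Y$ and take finite initial segments). With this characterisation the chain condition becomes a one-liner: if $Z\supseteq\bigcap_\alpha Y_\alpha$ is finitely defined, its finitely many forbidden graphs are all excluded from a single $Y_{\alpha_0}$ by the chain property, so $Y_{\alpha_0}\subseteq Z$ and $Z$ is $\Pi$-tough by Lemma~\ref{lem:limit}; hence $\bigcap_\alpha Y_\alpha$ is again a limit class and Zorn applies.
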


In Section~\ref{sec:boundary}, we identify the first boundary class for the {\sc upper dominating set} problem. To this end, we need a number of auxiliary results. 
The first of them is the following lemma dealing with limit classes, which was derived in \cite{Ale03,AKL04} as a step towards the proof of Theorem~\ref{thm:boundary}.   

\begin{lemma}\label{lem:limit}
If $X$ is a finitely defined class containing a limit class for an NP-hard problem $\Pi$, then $X$ is $\Pi$-tough.  
\end{lemma}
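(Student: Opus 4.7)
The plan is to exploit the finiteness of $M$, where $X = \Free(M)$, together with the definition of a limit class as an intersection of a nested decreasing sequence of $\Pi$-tough hereditary classes. The key observation is that each forbidden induced subgraph in $M$ must be excluded from the sequence $X_1 \supseteq X_2 \supseteq \ldots$ at some finite stage, and since $M$ is finite, there is a single index beyond which all of $M$ has been excluded simultaneously.

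First, I would let $Y$ be a limit class contained in $X$, so that $Y = \bigcap_{i \geq 1} X_i$ for some nested sequence of $\Pi$-tough classes $X_1 \supseteq X_2 \supseteq \ldots$ By hypothesis $Y \subseteq X = \Free(M)$, so no graph from $M$ belongs to $Y$. Therefore, for every $H \in M$ there exists an index $i_H$ with $H \notin X_{i_H}$.

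Second, using the finiteness of $M$, I set $N = \max_{H \in M} i_H$. By the nesting $X_N \subseteq X_{i_H}$ for each $H \in M$, so $H \notin X_N$ for every $H \in M$. Since $X_N$ is hereditary (limit classes are defined on top of hereditary classes in the preliminaries), no graph of $X_N$ can contain any $H \in M$ as an induced subgraph; otherwise $H$ would itself lie in $X_N$. Consequently $X_N \subseteq \Free(M) = X$.

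Finally, since $X_N$ is $\Pi$-tough, $\Pi$ is NP-hard on inputs from $X_N$; but every such input is also an input from the larger class $X$, so $\Pi$ is NP-hard on $X$ as well, i.e.\ $X$ is $\Pi$-tough. The argument is essentially bookkeeping; the only subtle point is that one must invoke hereditariness of the $X_i$ to pass from ``$H \notin X_N$'' to ``no graph in $X_N$ contains $H$ as an induced subgraph'', which is the step that makes the inclusion $X_N \subseteq \Free(M)$ work. I would not expect any serious obstacle beyond making this observation explicit.
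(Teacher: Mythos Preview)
The paper does not actually supply a proof of this lemma; it merely attributes the result to \cite{Ale03,AKL04}. Your argument is correct and is exactly the standard one: for each $H\in M$ pick an index $i_H$ with $H\notin X_{i_H}$, set $N=\max_{H\in M} i_H$ (which exists because $M$ is finite), and use hereditariness of $X_N$ to get $X_N\subseteq\Free(M)=X$, whence $X$ inherits $\Pi$-toughness from $X_N$. One minor wording point: $X_N$ is hereditary not because ``limit classes are defined on top of hereditary classes'' but because in this paper $\Pi$-tough is defined only for hereditary classes and each $X_i$ is $\Pi$-tough by assumption.
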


The next two results were proved in \cite{Korobitsyn} and \cite{AKL04}, respectively.

\begin{lemma}\label{lem:Zk}
The {\sc minimum dominating set} problem is NP-hard in the class $Z_k$ for each fixed value of $k$.
\end{lemma}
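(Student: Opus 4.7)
The plan is to emulate the strategy used for {\sc maximum independent set} in the same class, but with an edge-subdivision operation tuned to the domination number. I would start from the well-known NP-hardness of {\sc minimum dominating set} on graphs of maximum degree at most~3 and reduce it in polynomial time to {\sc minimum dominating set} on graphs in $Z_k$.

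Given an input graph $G$ of maximum degree~3 and a parameter $s$ depending on $k$, the construction is to subdivide every edge of $G$ by $3s$ new vertices, so that each original edge is replaced by a path with $3s+1$ edges. Call the resulting graph $G'$. Clearly $G'$ still has maximum degree~3, because every new vertex has degree exactly~2 and the degree of every original vertex is unchanged; and the construction runs in polynomial time. The crucial quantitative property is that the domination numbers satisfy $\gamma(G') = \gamma(G) + s\cdot|E(G)|$. One inequality comes from extending any minimum dominating set of $G$ by $s$ suitably placed internal vertices on each subdivided path, and the other from a standard local-exchange argument showing that any dominating set of $G'$ can be normalized to use exactly $s$ internal vertices per subdivided edge while its restriction to the original vertices still dominates~$G$.

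Next I would verify that $G'\in Z_k$ for $s$ sufficiently large. Every cycle of length $\ell$ in $G$ becomes a cycle of length $\ell(3s+1)$ in $G'$, so choosing $s$ sufficiently large forces the girth above~$k$. For the forbidden graphs $H_1,\ldots,H_k$ I would exploit the fact that $H_n$ has exactly two vertices of degree~3, namely its two path endpoints, whereas every subdivision-internal vertex of $G'$ has degree~2. Consequently, any induced copy of $H_n$ in $G'$ must place its two degree-3 vertices at original vertices of~$G$ and use two of the three subdivided paths at each such vertex as pendants, while its internal path traverses one or two complete subdivided edges, each contributing $3s+1$ vertices. Choosing $s$ so that $3s+2>k$ therefore rules out an induced $H_i$ for every $i\le k$.

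The main obstacle I expect is not the structural verification but the careful bookkeeping needed to establish the exact equality $\gamma(G')=\gamma(G)+s|E(G)|$. On each subdivided edge $uv$ one has to analyse simultaneously whether $u$, $v$, both, or neither lies in the dominating set and show that the minimum number of internal vertices required is in every case exactly $s$, possibly after relocating a boundary vertex using an argument in the spirit of Lemma~\ref{lem:2}. Once this is done, the map $G\mapsto G'$ is a polynomial-time reduction, $G'$ lies in $Z_k$, and $\gamma(G)$ is recovered from $\gamma(G')$ by subtracting the constant $s|E(G)|$, proving that {\sc minimum dominating set} is NP-hard on $Z_k$.
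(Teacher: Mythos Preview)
The paper does not actually prove Lemma~\ref{lem:Zk}; it simply attributes the result to Korobitsyn~\cite{Korobitsyn} and states it without argument. So there is no in-paper proof to compare your proposal against. That said, your plan is precisely the standard route and almost certainly the one taken in the cited reference: reduce from {\sc minimum dominating set} on subcubic graphs by replacing every edge with a path of length $3s+1$, use the identity $\gamma(G')=\gamma(G)+s\,|E(G)|$, and then observe that in $G'$ both short cycles and small $H_i$ are destroyed once $s$ is large enough.

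Two small points of care. First, your phrase ``traverses one or two complete subdivided edges'' should read ``one or more'': the induced path of an $H_n$ in $G'$ may pass through arbitrarily many original vertices of degree $\le 3$, each used with degree~$2$ in the induced subgraph; what matters is that the path length is a positive multiple of $3s+1$, so $n-1\ge 3s+1$, which is what you need. Second, the equality $\gamma(G')=\gamma(G)+s\,|E(G)|$ does require a genuine argument for the lower bound: a minimum dominating set $D'$ of $G'$ need not have $D'\cap V(G)$ dominating in $G$, and if two adjacent original vertices are both dominated only through the \emph{same} subdivided edge, a naive ``one surplus internal vertex per bad original vertex'' count can fail. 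The clean way is to show that on any subdivided edge $uv$ with $u,v\notin D'$, forcing the internal vertex adjacent to $u$ into $D'$ already pushes $t_e\ge s+1$ (because no minimum dominating set of $P_{3s}$ contains an endpoint), and then choose the witnessing edges so that the map from ``bad'' original vertices to edges is injective, or alternatively do the exchange one vertex at a time. You flagged this bookkeeping yourself; once it is written out, the reduction is complete and your argument stands.
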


\begin{theorem}\label{thm:domS}
The class $\cal S$ is a boundary class for the {\sc minimum dominating set} problem.
\end{theorem}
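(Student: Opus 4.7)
The plan is to verify two properties: that $\mathcal{S}$ is a limit class for {\sc minimum dominating set} (call this problem $\Pi$), and that it is minimal among such limit classes.

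For the limit-class part, I would invoke Lemma~\ref{lem:Zk} directly. It asserts that $Z_k$ is $\Pi$-tough for every fixed $k$, and by the construction recalled in the preliminaries, $\mathcal{S}$ is precisely the intersection $\bigcap_{k \geq 3} Z_k$ of a strictly decreasing sequence of $\Pi$-tough hereditary classes. By the definition of limit class, $\mathcal{S}$ is a limit class for $\Pi$, no extra work needed.

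For minimality, I would suppose toward contradiction that some proper hereditary subclass $L \subsetneq \mathcal{S}$ is also a limit class, say $L = \bigcap_n Y_n$ with each $Y_n$ $\Pi$-tough. Since the inclusion is proper and $L$ is hereditary, there exists a tripod $T = S_{p,q,r} \in \mathcal{S} \setminus L$, and $T \notin Y_{n_0}$ for some $n_0$, so $Y_{n_0} \subseteq \Free(T)$. The structural observation is that a tripod $S_{a,b,c} \in \mathcal{S}$ is $T$-free iff, after sorting both arm-triples increasingly, at least one entry of $(a,b,c)$ is strictly smaller than the corresponding entry of $(p,q,r)$, forcing at least one arm of bounded length. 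Such ``short-armed'' tripods, and the forests whose components are such tripods, have bounded pathwidth. Consequently $\Pi$ is polynomial-time solvable on $\mathcal{S} \cap \Free(T)$ by standard dynamic programming.

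It then remains to turn this structural easiness of $\mathcal{S} \cap \Free(T)$ into an obstruction to the toughness of $Y_{n_0}$. The idea is to exploit the subdivision structure underlying Lemma~\ref{lem:Zk}: the NP-hardness reduction into $Z_k$ creates arbitrarily long induced tripod gadgets, and once $T$ is forbidden these gadgets are truncated to bounded pathwidth, collapsing the reduction and contradicting the toughness of $Y_{n_0}$. This last bridging step is the main obstacle; it is the technical content of \cite{Korobitsyn} and requires matching the tripod parameters of $T$ with the precise sizes of the subdivisions used in Lemma~\ref{lem:Zk}, so that no polynomial-time reduction inside $Y_{n_0}$ can survive the exclusion of $T$.
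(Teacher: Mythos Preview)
Note first that the paper does not itself prove Theorem~\ref{thm:domS}; it is quoted from~\cite{AKL04} as a known result. The paper does, however, prove the closely analogous Theorem~\ref{thrm:boundary}, and that argument exhibits precisely the mechanism your proposal is missing.

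Your limit-class half is fine. The minimality half has a genuine gap. Having fixed $T\in\mathcal{S}\setminus L$ and located $n_0$ with $Y_{n_0}\subseteq\Free(T)$, you attempt to contradict the $\Pi$-toughness of $Y_{n_0}$ by arguing that ``the reduction collapses''. This cannot succeed: the failure of one specific NP-hardness reduction to land in $Y_{n_0}$ does not make $\Pi$ easy on $Y_{n_0}$, and indeed your closing sentence (``no polynomial-time reduction inside $Y_{n_0}$ can survive'') is tantamount to proving $\Pi$ is polynomial on an arbitrary class about which you know only that it omits one tree --- which is false, since $\Free(T)$ contains, say, all complete graphs and $\Pi$ is NP-hard there. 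Separately, your structural remark about $\mathcal{S}\cap\Free(T)$ is vacuous: every graph in $\mathcal{S}$ is already a forest of tree-width~$1$, so $\Pi$ is polynomial on \emph{all} of $\mathcal{S}$, and this says nothing whatsoever about minimality.

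The right tool is Lemma~\ref{lem:limit}, used contrapositively. One does not attack $Y_{n_0}$ at all; instead one exhibits a \emph{finitely defined} class $X$ with $L\subseteq X$ on which $\Pi$ is provably polynomial. Since $X$ then contains a limit class, Lemma~\ref{lem:limit} forces $X$ to be $\Pi$-tough, the desired contradiction. Concretely, with $T\in\mathcal{S}\setminus L$ one takes $X=Z_k\cap\Free(T)$ for suitably large $k$: this is finitely defined, it contains $L$ (because $L\subseteq\mathcal{S}\subseteq Z_k$ and $L\subseteq\Free(T)$), and the real work is to show that $X$ has bounded tree-width (equivalently, bounded clique-width), whence $\Pi$ is polynomial on $X$. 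That width bound --- not any statement about reductions --- is the actual content of the minimality proof in~\cite{AKL04}, and it is exactly the template this paper follows in the proof of Theorem~\ref{thrm:boundary} (via Lemmas~\ref{lem:monotone}--\ref{lem:gamma}).
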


\section{NP-hardness results}
\label{sec:NP}

In this section, we prove two NP-hardness results about the {\sc upper dominating set} problem in restricted graph classes.  

\subsection{Planar graphs of degree at most $6$ and girth at least $6$}

\begin{theorem}\label{thm:1}
The {\sc upper dominating set} problem restricted to the class of planar graphs with maximum vertex degree $6$ and girth at least $6$ is NP-hard.
\end{theorem}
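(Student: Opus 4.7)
The plan is to prove NP-hardness by a polynomial-time reduction from a known hard version of the problem to one whose output lives inside the restricted class (planar, $\Delta\le 6$, girth $\ge 6$). A natural source is \textsc{upper dominating set} on cubic (or subcubic) planar graphs, or, failing that, a reduction from \textsc{maximum independent set} restricted to cubic planar graphs; both are known to be NP-hard and both output graphs of low degree embeddable in the plane, so degree and planarity will be easy to preserve. The real work is to enforce girth at least $6$ while controlling the upper domination number.

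The first step is to design an edge-replacement gadget $R$ that takes each edge $uv$ of the input graph $G$ and replaces it by a short planar ``rope'' of girth already $\ge 6$ connecting the images of $u$ and $v$. A candidate is a suitable multiple subdivision (e.g.\ replace $uv$ by a path $u$-$x_1$-$x_2$-$\cdots$-$x_k$-$v$ for a well-chosen constant $k$), possibly decorated with pendant copies of a small tree so that the gadget vertices have a forced contribution to any minimal dominating set. The gadget must have two properties: (i) topologically it is a planar path, so pasting one copy along every edge of $G$ preserves planarity, keeps the maximum degree at the original value (each original vertex $v$ has its old degree, each internal gadget vertex has degree at most $3$ because of the small decorations), and makes every new cycle long, hence girth $\ge 6$; and (ii) it admits an additive formula $\Gamma(G') = \Gamma(G) + c\cdot |E(G)|$ for a constant $c$, or, if I reduce from maximum independent set, $\Gamma(G') = \alpha(G) + c\cdot |E(G)|$.

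The second step is to verify this additive formula in two directions. For the easy direction, given an optimum set in $G$, one builds a minimal dominating set in $G'$ by taking the optimum set together with a fixed canonical choice of $c$ private-neighbor-rich vertices from each gadget; the private-neighbor condition (Lemma~\ref{lem:1} and Lemma~\ref{lem:2}) guarantees minimality. For the hard direction, given any minimal dominating set $D'$ of $G'$, one does a local analysis inside each gadget: using Lemma~\ref{lem:1} and~\ref{lem:2}, any vertex of $D'$ inside a gadget must have a private neighbor, which bounds $|D'\cap V(R)|$ by $c$ and additionally forces the intersection of $D'$ with the two endpoints to behave like an independent (or otherwise structured) configuration in the original edge, so that projecting $D'$ back to $V(G)$ yields a set witnessing the desired value of the source parameter.

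The principal obstacle is the rigidity of the additive formula: unlike the independent set double-subdivision trick cited in the discussion of $\cal S$, the upper domination number can increase by different amounts depending on how many endpoints of a gadget the solution uses, and on how private neighbors are shared across two adjacent gadgets glued at a common original vertex. I will therefore choose the gadget so that every viable minimal dominating set is forced into one of very few local ``shapes'' inside $R$, all contributing the same $c$, and I will impose enough decoration to forbid mixed shapes at a shared endpoint. Once this rigidity is proved, planarity, the degree bound $6$ (comfortably above $3$, giving slack for decoration), and girth $\ge 6$ follow immediately from the construction, and the reduction is complete.
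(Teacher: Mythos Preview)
Your overall plan coincides with the paper's approach: reduce from \textsc{maximum independent set} on planar cubic graphs via an edge-replacement gadget that preserves planarity, controls degree, and raises the girth, aiming for an additive formula of the shape $\Gamma(G')=\alpha(G)+c\,|E(G)|$. Two points deserve comment.

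First, the paper's gadget is simpler and more symmetric than the decorated-path idea you sketch. Each edge $e=uv$ is replaced by \emph{two} internally disjoint length-$3$ paths $u\text{--}v_e\text{--}u_e\text{--}v$ and $u\text{--}v'_e\text{--}u'_e\text{--}v$, so that each edge becomes a $6$-cycle. No pendant decorations are needed: the doubling alone supplies the rigidity you are worried about. A short case analysis shows that any minimal dominating set of $G'$ meets the four-vertex gadget $\{u_e,v_e,u'_e,v'_e\}$ in either zero or exactly two vertices (one from each path), and in the latter case at most one of $u,v$ lies in the set; a clean-up step handles the zero case. This yields $\Gamma(G')\ge k+2m \iff \alpha(G)\ge k$. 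The degree bound $6$ is tight (each cubic vertex acquires $3\times 2=6$ neighbours) and the girth is exactly $6$ from the gadget cycles themselves. Your single-path-with-pendants variant might also be made to work, but the analysis at shared endpoints that you flag as the ``principal obstacle'' is noticeably messier there; the parallel-path gadget sidesteps it.

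Second, be careful with your alternative source problem. NP-hardness of \textsc{upper dominating set} on (sub)cubic planar graphs is not established prior to this theorem, so invoking it would be circular; only the \textsc{maximum independent set} route is actually available. Your instinct to fall back on that is the right one.
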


\begin{proof}
We use a reduction from the {\sc maximum independent set} problem 
({\sc IS} for short) in planar cubic graphs, where {\sc IS} is NP-hard \cite{GareyJS76}. 
The input of the decision version of {\sc IS} consists of a simple graph $G=(V,E)$ and an integer $k$
and asks to decide if $G$ contains an independent set of size at least $k$. 

Let $G=(V,E)$ and an integer $k$ be an instance of {\sc IS}, where $G$ is a planar cubic graph. 
We denote the number of vertices and edges of $G$ by $n$ and $m$, respectively. 
We build an instance $G'=(V',E')$ of the {\sc upper dominating set} problem by replacing each edge $e=uv\in E$
with two induced paths $u-v_e-u_e-v$ and $u-v'_e-u'_e-v$, as shown in Figure~\ref{fig:1}. 

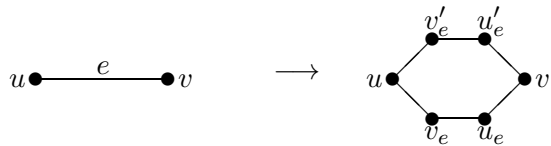
\begin{figure}[ht]
\begin{center}
\begin{picture}(250,60)


\put(25,30){\circle*{5}}
\put(75,30){\circle*{5}}

\put(160,30){\circle*{5}}
\put(210,30){\circle*{5}}
\put(175,15){\circle*{5}}
\put(195,15){\circle*{5}}
\put(175,45){\circle*{5}}
\put(195,45){\circle*{5}}

\put(25,30){\line(1,0){50}}
\put(160,30){\line(1,1){15}}
\put(160,30){\line(1,-1){15}}
\put(175,15){\line(1,0){20}}
\put(175,45){\line(1,0){20}}
\put(195,45){\line(1,-1){15}}
\put(195,15){\line(1,1){15}}

\put(15,27){$u$}
\put(79,27){$v$}
\put(48,32){$e$}

\put(150,27){$u$}
\put(214,27){$v$}
\put(172,7){$v_e$}
\put(192,7){$u_e$}

\put(172,49){$v'_e$}
\put(192,49){$u'_e$}

\put(115,30){$\longrightarrow$}
\end{picture}
\end{center}
\caption{Replacement of an edge by two paths}
\label{fig:1}
\end{figure}

Clearly, $G'$  can be constructed in time polynomial in $n$. 
Moreover, it is not difficult to see that $G'$ is a planar graph with maximum vertex degree 6 and girth at least $6$.

\medskip
We claim that $G$ contains an independent set of size at least $k$ if and only if~$G'$ contains a minimal dominating set of size at least $k+2m$.

\medskip
Suppose $G$ contains an independent set $S$ with $|S|\geq k$
and without loss of generality assume that $S$ is maximal with respect to set-inclusion 
(otherwise, we greedily add vertices to $S$ until it becomes a maximal independent
set). Now we consider a set $D\subset V'$ containing 
\begin{itemize}
\item all vertices of $S$,
\item vertices $v_e$ and $v'_e$ for each edge $e=uv\in E$ with $v\in S$,
\item exactly one vertex in $\{u_e,v_e\}$ (chosen arbitrarily) and exactly one vertex in $\{u'_e,v'_e\}$ (chosen arbitrarily) 
for each edge $e=uv\in E$ with $u,v\not\in S$.  
\end{itemize}
It is not difficult to see that $D$ is a maximal independent, and hence, by Lemma~\ref{lem:1}, a minimal dominating, set in $G'$.
Moreover, $|D|=|S|+2m\geq k+2m$.

\medskip
To prove the inverse implication, we first observe the following:
\begin{itemize}
\item {\it Every minimal dominating set in $G'$ contains either exactly two vertices or no vertex in the set $\{u_e,v_e,u'_e,v'_e\}$ for every
edge $e=uv\in E$}. Indeed, assume a minimal dominating set $D$ in $G'$ contains at least three vertices in $\{u_e,v_e,u'_e,v'_e\}$, 
say $u_e,v_e,u'_e$. But then $D$ is not minimal, since $u_e$ can be removed from the set. If $D$ contains one vertex in $\{u_e,v_e,u'_e,v'_e\}$, 
say $u_e$, then both $u$ and $v$ must belong to $D$ (otherwise it is not dominating), in which case it is not minimal ($u_e$ can be removed).
\item {\it If a minimal dominating set $D$ in $G'$ contains exactly two vertices in $\{u_e,v_e,u'_e,v'_e\}$, then} 
\begin{itemize}
\item {\it one of them belongs to  $\{u_e,v_e\}$ and the other to $\{u'_e,v'_e\}$}. Indeed, if both vertices belong to $\{u_e,v_e\}$, then 
both $u$ and $v$ must also belong to $D$ (to dominate $u'_e,v'_e$), in which case $D$ is not minimal ($u_e$ and $v_e$ can be removed). 
\item {\it at most one of $u$ and $v$ belongs to $D$}. Indeed, if both of them belong to $D$, then $D$ is not minimal dominating, 
because $u$ and $v$ dominate the set $\{u_e,v_e,u'_e,v'_e\}$  and any vertex of this set can be removed from $D$. 
\end{itemize}
\end{itemize}

Now let $D\subseteq V'$ be a minimal dominating set in $G'$ with $|D|\geq k+2m$. If $D$ contains exactly two vertices in the set 
$\{u_e,v_e,u'_e,v'_e\}$ for every edge $e=uv\in E$, then, according to the discussion above, the set $D\cap V$ is independent in $G$ 
and contains at least $k$ vertices, as required. 

Assume now that there are edges $e=uv\in E$ for which the set $\{u_e,v_e,u'_e,v'_e\}$ contains no vertex of $D$.
We call such edges {\it $D$-clean}. Obviously, both endpoints of a $D$-clean edge belong 
to $D$, since otherwise this set is not dominating. To prove the theorem in the situation when $D$-clean edges are present, 
we transform~$D$ into another minimal dominating set $D'$ with no $D'$-clean edges and with \hbox{$|D'|\ge |D|$}.
To this end, we do the following. For each vertex $u\in V$ incident to at least one $D$-clean edge, 
we first remove $u$ from $D$, and then for each $D$-clean edge \hbox{$e=uv\in E$} incident to $u$, we introduce vertices $v_e,v'_e$ to $D$. 
Under this transformation vertex~$v$ may become redundant (i.e. its removal may result in a dominating set), in which case
we remove it. It is not difficult to see that the set $D'$ obtained in this way is a minimal dominating set with no $D'$-clean edges and with $|D'|\ge |D|$.
Therefore, $D'\cap V$ is an independent set in $G$ of cardinality at least $k$. 
\end{proof}

\subsection{Complements of bipartite graphs}

To prove one more NP-hardness result for the {\sc upper dominating set} problem, let us introduce the following graph transformations. 
Given a graph $G=(V,E)$, we denote by 
\begin{itemize}
\item[$S(G)$] the incidence graph of $G$, i.e. the graph with vertex set $V\cup E$, where $V$ and $E$ are independent sets and a vertex $v\in V$  is adjacent to 
a vertex $e\in E$ in $S(G)$ if and only if $v$ is incident to $e$ in $G$. Alternatively, 
$S(G)$ is obtained from $G$ by subdividing each edge $e$ by a new vertex $v_e$. According to this interpretation,
we call $E$ the set of {\it new} vertices and $V$ the set of {\it old} vertices. Any graph of the form $S(G)$ for some $G$ will be called 
a {\it subdivision graph}. 
\item[$Q(G)$] the graph obtained from $S(G)$ by creating a clique on the set of old vertices and a clique on the set of new vertices.
We call any graph of the form $Q(G)$ for some $G$ a $Q$-{\it graph}.  
\end{itemize}

The importance of $Q$-graphs for the {\sc upper dominating set} problem is due to the following lemma, where we denote by $\Gamma(G)$
the size of an upper dominating set in $G$ and by $\gamma(G)$ the size of a dominating set of minimum cardinality in $G$.

\begin{lemma}\label{lem:Q}
Let $G$ be a graph with $n$ vertices such that $\Gamma(Q(G))\ge 3$. Then $\Gamma(Q(G))=n-\gamma(G)$.
\end{lemma}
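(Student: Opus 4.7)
My plan is to analyse a minimal dominating set $D$ of $Q(G)$ of size at least $3$ by splitting it into $D_V := D \cap V$ and $D_E := D \cap E$, and first to show that one of these parts must be empty. Suppose both are non-empty. If $|D_V|\ge 2$, then each $v\in D_V$ has a neighbour in $D$ via the clique on $V$, so by Lemma~\ref{lem:1} it requires a private neighbour outside $D$. But every $u\in V\setminus D$ is adjacent to all of $D_V$ through the $V$-clique, and every $f\in E\setminus D$ is adjacent to all of $D_E$ through the $E$-clique, so no candidate private neighbour of $v$ can exist, a contradiction. A symmetric contradiction arises if $|D_E|\ge 2$. Hence $|D_V|\le 1$ and $|D_E|\le 1$, which forces $|D|\le 2$ and contradicts $|D|\ge 3$. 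Therefore $D\subseteq V$ or $D\subseteq E$.

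For the upper bound $\Gamma(Q(G))\le n-\gamma(G)$, I would treat the two cases. If $D\subseteq V$, then the conditions for $D$ to be a minimal dominating set of $Q(G)$ translate exactly to: $D$ is a vertex cover of $G$, and every $v\in D$ has a neighbour in $V\setminus D$; equivalently $V\setminus D$ is a maximal independent set of $G$, giving $|D|\le n-i(G)\le n-\gamma(G)$, where $i(G)$ denotes the independent domination number. If $D\subseteq E$, then $D$ is an edge cover of $G$ in which every edge must have a private endpoint; a short argument (a path of length three or a cycle inside $(V,D)$ would produce an edge whose two endpoints each carry another edge of $D$, and so have no private endpoint) shows that $D$ is exactly the edge set of a spanning star forest of $G$. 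Such a forest with $c$ components has $n-c$ edges, and its set of centres is a dominating set of $G$, so $c\ge\gamma(G)$ and $|D|\le n-\gamma(G)$.

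For the matching lower bound, I would start from a minimum dominating set $S$ of $G$. The hypothesis $\Gamma(Q(G))\ge 3$ rules out degenerate inputs, so we may assume $G$ has no isolated vertex, and then every $u\in S$ has a neighbour in $V\setminus S$ (otherwise $S\setminus\{u\}$ would still dominate $G$, contradicting the minimum size of $S$). I would then partition $V\setminus S$ among the vertices of $S$ in such a way that every $u\in S$ receives at least one leaf, obtaining a spanning star forest with exactly $\gamma(G)$ components and $n-\gamma(G)$ edges; by the classification in the previous paragraph these edges form a minimal dominating set of $Q(G)$ of size $n-\gamma(G)$.

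The main technical step is the identification of minimal edge covers of $G$ with spanning star forests, and of the minimum number of components of such a forest with $\gamma(G)$; this is what drives the equality, since the $D\subseteq V$ side only yields $|D|\le n-i(G)$, which can be strictly smaller than $n-\gamma(G)$. The only other subtlety is the handling of isolated vertices of $G$, which must be excluded in order for the edge-cover construction on the $E$-side to produce a set of size exactly $n-\gamma(G)$.
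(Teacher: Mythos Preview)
Your strategy coincides with the paper's: show that a minimal dominating set of $Q(G)$ of size at least $3$ lies entirely in $V$ or entirely in $E$; bound each case above by $n-\gamma(G)$; and realise the bound by turning a minimum dominating set of $G$ into a spanning star forest whose edge set sits inside $E$. The upper-bound half and the $V$/$E$ dichotomy are fine and match the paper almost verbatim.

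The gap is in your lower bound. The claim that ``the hypothesis $\Gamma(Q(G))\ge 3$ rules out degenerate inputs, so we may assume $G$ has no isolated vertex'' is false: for $G=K_1\cup K_4$ one checks that $\{a,b,c\}\subseteq V$ (any three vertices of the $K_4$) is a minimal dominating set of $Q(G)$, so $\Gamma(Q(G))=3$, yet $G$ has an isolated vertex and hence no subset of $E$ can dominate it in $Q(G)$; your edge-side construction therefore produces nothing. Worse, the lemma as literally stated fails once isolated vertices are present: for $G$ equal to an isolated vertex together with the double star $S_{2,2}$ (two adjacent centres, each with two pendants) one has $n=7$, $\gamma(G)=3$, so $n-\gamma(G)=4$, but every minimal dominating set of $Q(G)$ of size $\ge 3$ lies in $V$ and is the complement of a maximal independent set of $G$, all of which have size $\ge 4$; thus $\Gamma(Q(G))=3\neq 4$. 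The paper's own proof carries the same hidden restriction, since it invokes Lemma~\ref{lem:2}, which is stated only for connected $G$; so both arguments should be read under the assumption that $G$ has no isolated vertices (this is harmless for the intended application to $G\in Z_k$). A smaller point: from ``every $u\in S$ has a neighbour in $V\setminus S$'' you cannot immediately ``partition $V\setminus S$ so that every $u\in S$ receives at least one leaf''; you need a matching from $S$ into $V\setminus S$. Such a matching does exist for a minimum dominating set $S$ in a graph without isolated vertices (if some $T\subseteq S$ satisfied $|N(T)\cap(V\setminus S)|<|T|$, then $(S\setminus T)\cup(N(T)\cap(V\setminus S))$ would be a strictly smaller dominating set), but it deserves a line. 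The paper gets the same matching more directly via Lemma~\ref{lem:2}, by first replacing $S$ with a minimum dominating set in which every vertex has a private neighbour outside $S$.
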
   

\begin{proof}
Let $D$ be a minimum dominating set in $G$, i.e. a dominating set of size $\gamma(G)$. Without loss of generality, 
we will assume that $D$ satisfies Lemma~\ref{lem:2}, i.e. every vertex of $D$ has a private neighbour outside of $D$. 
For every vertex $u$ outside of~$D$, consider exactly one edge, chosen arbitrarily, connecting $u$ to a vertex in $D$ and denote this edge by $e_u$.
We claim that the set \hbox{$D' = \{v_{e_u}\ :\  u \not\in D\}$} is a minimal dominating set in  $Q(G)$. 
By construction, $D'$ dominates \hbox{$E\cup (V - D)$} in $Q(G)$. To show that it also dominates $D$, assume by contradiction that 
a vertex $w\in D$ is not dominated by $D'$ in $Q(G)$. By Lemma~\ref{lem:2} we know that $w$ has a private neighbour $u$ outside of $D$.  
But then the edge $e=uw$ is the only edge connecting $u$ to a vertex in $D$. Therefore, $v_e$ necessarily belongs to $D'$ and
hence it dominates $w$, contradicting our assumption. In order to show that $D'$ is a minimal dominating set, we observe that 
if we remove from $D'$ a vertex~$v_{e_u}$ with $e_u=uw$, $u\not \in D$, $w\in D$, then $u$ becomes undominated in $Q(G)$. 
Finally, since $|D'| = n - |D|$, we conclude that  $\Gamma(Q(G))\ge n-|D|=n-\gamma(G)$.

\medskip
Conversely, let $D'$ be an upper dominating set in $Q(G)$, i.e. a minimal dominating set of size $\Gamma(Q(G))\ge 3$. 
Then $D'$ cannot intersect both $V$ and $E$, since otherwise it contains exactly one vertex in each of these sets (else it is not minimal, 
because each of these sets is a clique), in which case $|D'|=2$. 

Assume first that $D'\subseteq V$. Then $V-D'$ is an independent set in $G$. Indeed, if $G$ contains an edge $e$ connecting two vertices in
$V-D'$, then vertex $v_e$ is not dominated by $D'$ in $Q(G)$, a contradiction. Moreover, $V-D'$ is a maximal (with respect to set-inclusion) independent set 
in $G$, because $D'$ is a minimal dominating set in $Q(G)$. Therefore, $V-D'$ is a dominating set in $G$ of size 
$n-\Gamma(Q(G))$ and hence $\gamma(G)\le n-\Gamma(Q(G))$.

Now assume $D'\subseteq E$.   
Let us denote by $G'$ the subgraph of $G$ formed by the edges (and all their endpoints) $e$ such that $v_e\in D'$. Then: 

\begin{itemize}
\item {\it $G'$ is a spanning forest of $G$}, because $D'$ covers $V$ (else $D'$ is not dominating in $Q(G)$) and $G'$ is acyclic 
(else $D'$ is not a minimal dominating set in $Q(G)$). 
\item {\it $G'$ is $P_4$-free}, i.e. each connected component of $G'$ is a star, since otherwise $D'$ is not a minimal dominating set in $Q(G)$,
because any vertex of $D'$ corresponding to the middle edge of a $P_4$ in $G'$ can be removed from $D'$.
\end{itemize}

Let $D$ be the set of the centers of the stars of $G'$. Then $D$ is dominating in~$G$ (since $D'$ covers $V$) and $|D| = n-|D'|$, 
i.e. $\gamma(G)\le n-\Gamma(Q(G))$, as required.
\end{proof}

Since the {\sc minimum dominating set} problem is NP-hard and $Q(G)$ is a co-bipartite graph, Lemma~\ref{lem:Q} leads to the following conclusion.

\begin{theorem}\label{thm:2}
The {\sc upper dominating set} problem restricted to the class of complements of bipartite graphs is NP-hard. 
\end{theorem}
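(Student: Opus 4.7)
My plan is to present Theorem~\ref{thm:2} as a direct application of Lemma~\ref{lem:Q}, combined with the observation that $Q(G)$ is always co-bipartite. The reduction will be from the classical \textsc{minimum dominating set} problem, which is known to be NP-hard on general graphs.

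First I would verify that $Q(G)$ lies in the class of complements of bipartite graphs. By construction, the vertex set of $Q(G)$ partitions into the set $V$ of old vertices and the set $E$ of new vertices, each of which forms a clique. Passing to the complement, both $V$ and $E$ become independent sets, so $\overline{Q(G)}$ is bipartite and $Q(G)$ is co-bipartite. This construction is clearly polynomial in the size of $G$.

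Second, given an instance $(G,k)$ of \textsc{minimum dominating set} on an $n$-vertex graph $G$, I would build $Q(G)$ and ask whether $Q(G)$ admits an upper dominating set of size at least $n-k$. By Lemma~\ref{lem:Q}, as long as $\Gamma(Q(G))\ge 3$, this question is equivalent to asking whether $\gamma(G)\le k$, which yields the desired polynomial-time reduction and hence NP-hardness in the target class.

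The only mildly delicate point is the hypothesis $\Gamma(Q(G))\ge 3$ required by Lemma~\ref{lem:Q}, and this is where the minor bookkeeping lives. I would dispose of it by restricting, without loss of generality, to input graphs with at least six vertices and no isolated vertex; \textsc{minimum dominating set} remains NP-hard under this restriction, since isolated vertices must belong to every dominating set and can be stripped off in a preprocessing step. For such $G$ one has $\gamma(G)\le n/2$, and the explicit minimal dominating set of $Q(G)$ exhibited in the first half of the proof of Lemma~\ref{lem:Q} already has size $n-\gamma(G)\ge n/2 \ge 3$, so the hypothesis is satisfied. The finitely many remaining small instances can be resolved in constant time. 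I do not anticipate any real obstacle, as the whole content of the theorem is essentially packaged in Lemma~\ref{lem:Q}; the step from that lemma to Theorem~\ref{thm:2} is routine.
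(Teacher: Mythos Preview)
Your proposal is correct and follows exactly the paper's approach: the paper derives Theorem~\ref{thm:2} in one sentence from Lemma~\ref{lem:Q} together with the observation that $Q(G)$ is co-bipartite and that \textsc{minimum dominating set} is NP-hard. You are in fact more careful than the paper, which does not explicitly address the hypothesis $\Gamma(Q(G))\ge 3$; your handling of it via $\gamma(G)\le n/2$ for graphs without isolated vertices is fine (Lemma~\ref{lem:2} is stated for connected graphs, but it applies componentwise, or one may simply restrict the reduction to connected inputs).
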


\section{A boundary class for {\sc upper domination}}
\label{sec:boundary}

Since the {\sc upper dominating set} problem is NP-hard in the class of complements of bipartite graphs, this class must contain a boundary class for the problem.
An idea about the structure of such a boundary class comes from Theorem~\ref{thm:domS} and Lemma~\ref{lem:Q} and can be roughly described as follows: 
a boundary class for {\sc upper domination} consists of graphs $Q(G)$ obtained from 
graphs $G$ in $\mathcal{S}$. In order to transform this idea into a formal proof, we need more notations and more auxiliary results.

For an arbitrary class $X$ of graphs, we denote $S(X):=\{S(G)\ :\ G\in X\}$ and $Q(X):=\{Q(G)\ :\ G\in X\}$. 
In particular, $Q({\cal G})$ is the set of all $Q$-graphs, where $\cal G$ is the class of all simple graphs. 
We observe that an induced subgraph of a $Q$-graph is not necessarily a $Q$-graph. Indeed, in a $Q$-graph
every new vertex is adjacent to {\it exactly} two old vertices. However, by deleting some old vertices in a $Q$-graph
we may obtain a graph in which a new vertex is adjacent to at most one old vertex. Therefore, $Q(X)$ is not necessarily hereditary even if 
$X$ is a hereditary class. We denote by $Q^*(X)$ the hereditary closure of $Q(X)$, i.e. the class obtained from $Q(X)$ 
by adding to it all induced subgraphs of the graphs in $Q(X)$. Similarly, we denote by $S^*(X)$ the hereditary closure of $S(X)$.

With the above notation, our goal is proving that $Q^*({\cal S})$ is a boundary class for 
the {\sc upper dominating set} problem. To achieve this goal we need the following lemmas.

\begin{lemma}\label{lem:monotone}
Let $X$ be a monotone class of graphs such that ${\cal S}\not\subseteq X$, then the clique-width of the graphs in $Q^*(X)$ is bounded by a constant.  
\end{lemma}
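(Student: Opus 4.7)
Since $X$ is monotone and ${\cal S}\not\subseteq X$, there exist indices with $S_{i,j,\ell}\notin X$; setting $t=\max(i,j,\ell)$, the tripod $S_{t,t,t}$ contains $S_{i,j,\ell}$ as a subgraph, so $S_{t,t,t}\notin X$ and, by monotonicity, no graph in $X$ admits $S_{t,t,t}$ as a subgraph. My plan is a three-step reduction: pass from tree-width of $X$ to tree-width of the subdivision graphs $S(H)$, and finally to clique-width of $Q(H)$ and of its induced subgraphs.

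First I would bound the tree-width of $X$ uniformly; this is where the substance of the argument lies. I would invoke (a consequence of) the Robertson--Seymour excluded-grid theorem---already cited in the introduction---in its topological-minor form: for some function $f$, every graph of tree-width at least $f(r)$ contains an $r\times r$ wall as a subdivided subgraph. Since $X$ is closed under taking subgraphs, unbounded tree-width in $X$ would force $X$ to contain arbitrarily large subdivided walls as subgraphs. A subdivided wall of sufficiently large size contains $S_{t,t,t}$ as a subgraph: pick any branch vertex of degree three and follow three internally disjoint grid-paths of combinatorial length $t$. This contradicts the absence of $S_{t,t,t}$ from $X$, so every $H\in X$ has tree-width at most some constant $w$.

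Next I would transfer the bound to $S(H)$. Any tree decomposition of $H\in X$ of width $w$ extends to one of $S(H)$ of width at most $\max(w,2)$ by placing each subdivision vertex $v_e$ into a bag that already contains both endpoints of $e$. The Courcelle--Olariu theorem, which converts bounded tree-width into bounded clique-width, then yields $\mathrm{cw}(S(H))\le k_0$ for a constant $k_0$.

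Finally, I would pass from $S(H)$ to $Q(H)$ and on to its induced subgraphs using the standard fact that if $B$ is a bipartite graph with bipartition $(A_1,A_2)$ and clique-width at most $k$, then the graph $B^{\star}$ obtained by turning each $A_i$ into a clique has clique-width bounded by a constant depending only on $k$: refine each label of a $k$-expression for $B$ into two labels according to the side of the bipartition on which the vertex lies---this gives a $2k$-expression that still yields $B$, since the edges added by $B$-operations within a single side are vacuous for the bipartite graph $B$---and then append the operations that join every pair of same-side labels, together with the familiar two-label trick for completing a set of same-label vertices into a clique. Applied with $B=S(H)$, this bounds $\mathrm{cw}(Q(H))$ by a constant $k_1$, and since clique-width is non-increasing under induced subgraphs, every graph in \Qs{X} has clique-width at most $k_1$. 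The main obstacle is the very first step: turning the absence of a single tripod as subgraph into a uniform tree-width bound requires both the grid/wall theorem and the combinatorial observation that large subdivided walls contain large tripods; the remaining two steps are routine manipulations of tree-decompositions and clique-width expressions.
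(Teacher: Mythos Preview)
Your three-step route---bound the tree-width of $X$, carry the bound to $S(X)$, then convert to a clique-width bound for $Q(X)$ and hence for $Q^*(X)$---is exactly the paper's. The only differences are in the justifications quoted: for the first step the paper cites \cite{critical} together with the tree-width/clique-width equivalence on monotone classes \cite{CO00}, whereas you argue directly via the excluded-wall theorem; for the passage from $S(H)$ to $Q(H)$ the paper invokes the stability of bounded clique-width under local complementation \cite{recent} rather than your explicit label-doubling sketch. Both substitutions are legitimate and arguably more self-contained.

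One step does need repair. From $\mathcal{S}\not\subseteq X$ you conclude that some single tripod $S_{i,j,\ell}$ is absent from $X$, but $\mathcal{S}$ is the class of \emph{forests} whose components are tripods, and a monotone class may contain every connected tripod while excluding a disjoint union of two of them (for instance, the class of all graphs having at most one vertex of degree at least~$3$). The fix is immediate and preserves your argument: what is actually guaranteed is that some forest $F\in\mathcal{S}$ lies outside $X$, and a sufficiently large subdivided wall contains any fixed such $F$ as a subgraph---just use pairwise disjoint regions of the wall for the several components---so the grid-theorem step still yields a uniform tree-width bound with $F$ in place of $S_{t,t,t}$. Your clique-width sketch for turning the two colour classes of $S(H)$ into cliques is also a little loose (the ``two-label trick'' must be woven through the expression rather than merely appended, and disjoint-union nodes need care), but the claimed bound is standard and is precisely what the paper imports from \cite{recent}.
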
 
 
\begin{proof}
In \cite{critical}, it was proved that if ${\cal S}\not\subseteq X$, then the clique-width is bounded for graphs in $X$. 
It is known (see e.g.  \cite{CO00}) that for monotone classes, the clique-width is bounded if and only if the tree-width is bounded.
By subdividing the edges of all graphs in $X$ exactly once, we transform $X$ into the class $S(X)$, where the tree-width is still bounded, since the subdivision of 
an edge of a graph does not change its tree-width. Since bounded tree-width implies bounded clique-width (see e.g.  \cite{CO00}),
we conclude that $S(X)$ is a class of graphs of bounded clique-width. Now, for each graph $G$ in $S(X)$ we create two cliques by complementing 
the edges within the sets of new and old vertices. This transforms $S(X)$ into $Q(X)$. 
It is known (see e.g. \cite{recent}) that local complementations applied finitely many times do not change the clique-width ``too much'',
i.e they transform a class of graphs of bounded clique-width into another class of graphs of bounded clique-width. Therefore, the clique-width of graphs in $Q(X)$ is bounded.
Finally, the clique-width of a graph is never smaller than the clique-width of any of its induced subgraphs (see e.g.  \cite{CO00}). Therefore,    
the clique-width of graphs in $Q^*(X)$ is also bounded.
\end{proof}

\begin{lemma}\label{lem:cw}
Let $X\subseteq Q^*({\cal G})$ be a hereditary class. The clique-width of graphs in $X$ is bounded by a constant 
if and only if it is bounded for $Q$-graphs in $X$.  
\end{lemma}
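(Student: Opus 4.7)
The forward direction is immediate, since the $Q$-graphs in $X$ are a subfamily of $X$. For the converse, suppose $\mathrm{cw}(Q(G))\le c$ for every $Q(G)\in X$, and fix $H\in X$. Since $X\subseteq Q^*(\mathcal{G})$, we can pick $G$ with $H\ssi Q(G)$ and inherit the old/new partition $V(H)=V_o\cup V_n$. Define $V_n^i=\{v\in V_n:|N_H(v)\cap V_o|=i\}$ for $i=0,1,2$, and further split $V_n^1=\bigsqcup_{u\in V_o}V_n^{1,u}$ by the unique old neighbour.

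The idea is to extract a $Q$-graph core of $H$ that lies in $X$, collapse twins, and embed what remains as an induced subgraph of a slightly enlarged $Q$-graph whose clique-width is controlled. Observe that $H[V_o\cup V_n^2]=Q(G^-)$, where $G^-$ has vertex set $V_o$ and edges in bijection with $V_n^2$; by heredity $Q(G^-)\in X$ and so $\mathrm{cw}(Q(G^-))\le c$. Within each $V_n^{1,u}$ and within $V_n^0$, the vertices are mutually true twins in $H$, so collapsing each class to a single representative yields $H'$ with $\mathrm{cw}(H)=\mathrm{cw}(H')$. I would then build $\tilde G$ from $G^-$ by attaching, at each $u$ with $V_n^{1,u}\ne\emptyset$, a pendant vertex $w_u$ joined to $u$, and (if $V_n^0\ne\emptyset$) adding one isolated edge on two fresh vertices. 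A direct verification identifies $H'$ as $Q(\tilde G)$ with exactly these newly added fake old vertices deleted, so $H'\ssi Q(\tilde G)$ and hence $\mathrm{cw}(H)\le\mathrm{cw}(Q(\tilde G))$.

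The remaining---and main---step is to bound $\mathrm{cw}(Q(\tilde G))$ by a constant depending only on $\mathrm{cw}(Q(G^-))$. My plan is to pass through the bipartite ``unbicompleted'' subdivision graphs $S(G^-)\ssi S(\tilde G)$: by the local-complementation result invoked in the proof of Lemma~\ref{lem:monotone} (see~\cite{recent}), $\mathrm{cw}(Q(\cdot))$ and $\mathrm{cw}(S(\cdot))$ are equivalent up to a constant factor, so it suffices to show $\mathrm{cw}(S(\tilde G))\le\mathrm{cw}(S(G^-))+O(1)$. But $S(\tilde G)$ is obtained from $S(G^-)$ by attaching a pendant path of length $2$ at each $u\in V_o^+:=\{u:V_n^{1,u}\ne\emptyset\}$, together with one disjoint $P_3$ if needed. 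The standard expression-level argument that ``attaching a pendant at every vertex of a graph increases its clique-width by a constant''---via a uniform modification of a $k$-expression that introduces each pendant at its parent's creation time using a couple of fresh temporary labels and one permanent accumulator, then restricts to an induced subgraph to obtain pendants only on a chosen subset---extends directly to bounded-depth pendant trees. Iterating twice (once for the inner vertices, once for the leaves), plus a constant-cost disjoint union with the $P_3$ gadget, is the technical heart of the proof, and yields the required additive constant. Composing the bounds then gives $\mathrm{cw}(H)\le O(c)$, completing the argument.
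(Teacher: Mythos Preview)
Your argument is correct, but the route differs substantially from the paper's. The paper passes from $X$ to a bipartite class $X'\subseteq S^*(\mathcal{G})$ by complementing within each clique (via~\cite{recent}), then observes that any graph in $S^*(\mathcal{G})$ that is \emph{not} a subdivision graph has a new vertex of degree at most one and is therefore disconnected or has a cut-point; invoking the result of~\cite{LR} that clique-width in a hereditary class is bounded iff it is bounded on its $2$-connected members then finishes immediately, since the $2$-connected graphs in $X'$ are precisely the subdivision graphs. Your proof is instead constructive: you extract the $Q$-graph core $Q(G^-)\ssi H$ (which lies in $X$ by heredity), collapse the true-twin classes $V_n^{1,u}$ and $V_n^0$, re-embed the quotient $H'$ into a larger $Q(\tilde G)$ via pendant and isolated-edge gadgets, and then bound $\mathrm{cw}(Q(\tilde G))$ against $\mathrm{cw}(Q(G^-))$ by passing to the $S$-side and arguing directly at the level of $k$-expressions that attaching bounded-depth pendant paths costs only a constant number of extra labels. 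The paper's version is shorter and leans on the black box from~\cite{LR}; yours is more self-contained---the pendant lemma you need is essentially the relevant special case of that black box, provable by a leaf-level rewrite of the expression---at the price of more moving parts. One small caveat: the exact equality $\mathrm{cw}(H)=\mathrm{cw}(H')$ after collapsing true twins holds only once $\mathrm{cw}\ge 2$ (creating the edge between a vertex and its new twin may cost one temporary label), but this is irrelevant for boundedness.
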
 
 
\begin{proof}
The lemma is definitely true if $X=Q^*(Y)$ for some class $Y$. In this case, by definition, every non-$Q$-graph in $X$
is an induced subgraph of a $Q$-graph from $X$. However, in general, $X$ may contain a non-$Q$-graph $H$ such that no $Q$-graph containing $H$
as an induced subgraph belongs to $X$. In this case, we prove the result as follows. 

First, we transform each graph $H$ in $X$ into a bipartite graph $H'$ by replacing the two cliques of $H$ (i.e. the sets of old and new vertices) with independent sets.
In this way, $X$ transforms into a class $X'$ which is a subclass of $S^*({\cal G})$. As we mentioned in the proof of Lemma~\ref{lem:monotone},
this transformation does not change the clique-width ``too much'', i.e. the clique-width of graphs in $X$ is bounded if and only if 
it is bounded for graphs in $X'$.

By definition, $H\in X$ is a $Q$-graph if and only if $H'$ is a subdivision graph, i.e. $H'=S(G)$ for some graph $G$. 
Therefore, we need to show that the clique-width of graphs in $X'$ is bounded  
if and only if it is bounded for subdivision graphs in $X'$. In one direction, the statement is trivial.
To prove it in the other direction, assume the clique-width of subdivision graphs in $X'$ is bounded.  
If $H'$ is not a subdivision, it contains new vertices of degree 0 or 1. If $H'$ contains a vertex of degree 0, then it is disconnected,
and if $H'$ contains a vertex $x$ of degree 1, then it has a cut-point (the neighbour of $x$). 
It is well-known that the clique-width of graphs in a hereditary class is bounded if and only if it is bounded for connected graphs in the class.
Moreover, it was shown in \cite{LR}  that  the clique-width of graphs in a hereditary class is bounded if and only if it is bounded for 2-connected graphs 
(i.e. connected graphs without cut-points) in the class.
Since connected graphs without cut-points in $X'$ are subdivision graphs, we conclude that the clique-width is bounded for all graphs in $X'$.  
\end{proof}

\medskip
Finally, to prove the main result of this section, we need to show that $Q^*({\cal G})$ is a finitely defined class. 
To show this, we first characterize graphs in $Q^*({\cal G})$ as follows:
a graph $G$ belongs to $Q^*({\cal G})$ if and only if the vertices of $G$ can be partitioned into two (possibly empty) cliques $U$ and $W$ such that 
\begin{itemize} 
\item[(a)] every vertex in $W$ has at most two neighbours in $U$,
\item[(b)] if $x$ and $y$ are two vertices of $W$ each of which has {\it exactly} two neighbours in $U$, then $N(x)\cap U\ne N(y)\cap U$.
\end{itemize}
In the proof of the following lemma, we call any partition satisfying (a) and (b) {\it nice}. Therefore,    
$Q^*({\cal G})$ is precisely the class of graphs admitting a nice partition. Now we characterize $Q^*({\cal G})$ in terms of minimal 
forbidden induced subgraphs. 

\begin{figure}\centering
\begin{tikzpicture}
[scale=.5,auto=left]
\node[vertex] (a) at (0,0) {};
\node[vertex] (b) at (4,0) {};
\node[vertex] (c) at (4,4) {};
\node[vertex] (d) at (0,4) {};
\node[vertex] (e) at (1.25,2) {};
\node[vertex] (f) at (2.75,2) {};


\draw[edge] (a) -- (b);
\draw[edge] (c) -- (d);
\draw[edge] (d) -- (a);

\draw[edge] (a) -- (e);
\draw[edge] (a) -- (f);

\draw[edge] (b) -- (e);
\draw[edge] (b) -- (f);

\draw[edge] (c) -- (e);
\draw[edge] (c) -- (f);

\draw[edge] (d) -- (e);
\draw[edge] (d) -- (f);

\draw[edge] (e) -- (f);
\coordinate [label=center:$G_1$] (G1) at (2,-1);
\end{tikzpicture}
\begin{tikzpicture}
[scale=.5,auto=left]
\node[vertex] (a) at (0,0) {};
\node[vertex] (b) at (4,0) {};
\node[vertex] (c) at (4,4) {};
\node[vertex] (d) at (0,4) {};
\node[vertex] (e) at (1.25,2) {};
\node[vertex] (f) at (2.75,2) {};

\draw[edge] (b) -- (c);

\draw[edge] (a) -- (b);
\draw[edge] (c) -- (d);
\draw[edge] (d) -- (a);

\draw[edge] (a) -- (e);
\draw[edge] (a) -- (f);

\draw[edge] (b) -- (e);
\draw[edge] (b) -- (f);

\draw[edge] (c) -- (e);
\draw[edge] (c) -- (f);

\draw[edge] (d) -- (e);
\draw[edge] (d) -- (f);

\draw[edge] (e) -- (f);
\coordinate [label=center:$G_2$] (G2) at (2,-1);
\end{tikzpicture}
\begin{tikzpicture}
[scale=.5,auto=left]
\node[vertex] (a) at (0,0) {};
\node[vertex] (b) at (4,0) {};
\node[vertex] (c) at (4,4) {};
\node[vertex] (d) at (0,4) {};
\node[vertex] (e) at (1.25,2) {};
\node[vertex] (f) at (2.75,2) {};

\draw[edge] (a) -- (b);
\draw[edge] (c) -- (d);

\draw[edge] (a) -- (e);
\draw[edge] (a) -- (f);

\draw[edge] (b) -- (e);
\draw[edge] (b) -- (f);

\draw[edge] (c) -- (e);
\draw[edge] (c) -- (f);

\draw[edge] (d) -- (e);
\draw[edge] (d) -- (f);

\draw[edge] (e) -- (f);
\coordinate [label=center:$G_3$] (G3) at (2,-1);
\end{tikzpicture}
\begin{tikzpicture}
[scale=.5,auto=left]
\node[vertex] (a) at (0,0) {};
\node[vertex] (b) at (4,0) {};
\node[vertex] (c) at (4,4) {};
\node[vertex] (d) at (0,4) {};
\node[vertex] (e) at (5.5,2) {};
\node[vertex] (f) at (-1.5,2) {};

\draw[edge] (a) -- (b);
\draw[edge] (b) -- (c);
\draw[edge] (c) -- (d);
\draw[edge] (d) -- (a);
\draw[edge] (a) -- (c);
\draw[edge] (b) -- (d);

\draw[edge] (c) -- (e);
\draw[edge] (b) -- (e);

\draw[edge] (a) -- (f);
\draw[edge] (d) -- (f);
\coordinate [label=center:$G_4$] (G4) at (2,-1);
\end{tikzpicture}

\begin{tikzpicture}
[scale=.3,auto=left]
\node[vertex] (a) at (0,0) {};
\node[vertex] (b) at (4,0) {};
\node[vertex] (c) at (4,4) {};
\node[vertex] (d) at (0,4) {};
\node[vertex] (e) at (2,5.5) {};
\node[vertex] (f) at (2,-1.5) {};
\node[vertex] (g) at (-2,2) {};
\node[vertex] (h) at (6,2) {};

\draw[edge] (a) -- (b);
\draw[edge] (c) -- (d);

\draw[edge] (c) -- (e);
\draw[edge] (d) -- (e);
\draw[edge] (a) -- (f);
\draw[edge] (b) -- (f);
\draw[edge] (g) -- (c);
\draw[edge] (g) -- (d);
\draw[edge] (g) -- (a);
\draw[edge] (g) -- (b);
\draw[edge] (h) -- (c);
\draw[edge] (h) -- (d);
\draw[edge] (h) -- (a);
\draw[edge] (h) -- (b);

\draw[edge] (g) .. controls +(1,3)   ..  (e);
\draw[edge] (g) .. controls +(1,-3)  ..  (f);
\draw[edge] (h) .. controls (5,5)  .. (e);
\draw[edge] (h) .. controls (5,-1)  .. (f);
\coordinate [label=center:$G_5$] (G5) at (2,-3);
\end{tikzpicture}
\begin{tikzpicture}
[scale=.3,auto=left]
\node[vertex] (a) at (0,0) {};
\node[vertex] (b) at (4,0) {};
\node[vertex] (c) at (4,4) {};
\node[vertex] (d) at (0,4) {};
\node[vertex] (e) at (2,5.5) {};
\node[vertex] (f) at (2,-1.5) {};
\node[vertex] (g) at (-2,2) {};
\node[vertex] (h) at (6,2) {};

\draw[edge] (a) -- (b);
\draw[edge] (c) -- (d);
\draw[edge] (d) -- (a);

\draw[edge] (c) -- (e);
\draw[edge] (d) -- (e);
\draw[edge] (a) -- (f);
\draw[edge] (b) -- (f);
\draw[edge] (g) -- (c);
\draw[edge] (g) -- (d);
\draw[edge] (g) -- (a);
\draw[edge] (g) -- (b);
\draw[edge] (h) -- (c);
\draw[edge] (h) -- (d);
\draw[edge] (h) -- (a);
\draw[edge] (h) -- (b);

\draw[edge] (g) .. controls +(1,3)   ..  (e);
\draw[edge] (g) .. controls +(1,-3)  ..  (f);
\draw[edge] (h) .. controls (5,5)  .. (e);
\draw[edge] (h) .. controls (5,-1)  .. (f);
\coordinate [label=center:$G_6$] (G6) at (2,-3);
\end{tikzpicture}
\begin{tikzpicture}
[scale=.3,auto=left]
\node[vertex] (a) at (0,0) {};
\node[vertex] (b) at (4,0) {};
\node[vertex] (c) at (4,4) {};
\node[vertex] (d) at (0,4) {};
\node[vertex] (e) at (2,5.5) {};
\node[vertex] (f) at (2,-1.5) {};
\node[vertex] (g) at (-2,2) {};
\node[vertex] (h) at (6,2) {};

\draw[edge] (a) -- (b);
\draw[edge] (c) -- (d);
\draw[edge] (d) -- (a);
\draw[edge] (b) -- (c);

\draw[edge] (c) -- (e);
\draw[edge] (d) -- (e);
\draw[edge] (a) -- (f);
\draw[edge] (b) -- (f);
\draw[edge] (g) -- (c);
\draw[edge] (g) -- (d);
\draw[edge] (g) -- (a);
\draw[edge] (g) -- (b);
\draw[edge] (h) -- (c);
\draw[edge] (h) -- (d);
\draw[edge] (h) -- (a);
\draw[edge] (h) -- (b);

\draw[edge] (g) .. controls +(1,3)   ..  (e);
\draw[edge] (g) .. controls +(1,-3)  ..  (f);
\draw[edge] (h) .. controls (5,5)  .. (e);
\draw[edge] (h) .. controls (5,-1)  .. (f);
\coordinate [label=center:$G_7$] (G7) at (2,-3);
\end{tikzpicture}
\begin{tikzpicture}
[scale=.3,auto=left]
\node[vertex] (a) at (0,0) {};
\node[vertex] (b) at (4,0) {};
\node[vertex] (c) at (4,4) {};
\node[vertex] (d) at (0,4) {};
\node[vertex] (e) at (2,5.5) {};
\node[vertex] (f) at (2,-1.5) {};
\node[vertex] (g) at (-2,2) {};
\node[vertex] (h) at (6,2) {};

\draw[edge] (a) -- (b);
\draw[edge] (c) -- (d);
\draw[edge] (d) -- (a);
\draw[edge] (b) -- (c);
\draw[edge] (e) -- (f);

\draw[edge] (c) -- (e);
\draw[edge] (d) -- (e);
\draw[edge] (a) -- (f);
\draw[edge] (b) -- (f);
\draw[edge] (g) -- (c);
\draw[edge] (g) -- (d);
\draw[edge] (g) -- (a);
\draw[edge] (g) -- (b);
\draw[edge] (h) -- (c);
\draw[edge] (h) -- (d);
\draw[edge] (h) -- (a);
\draw[edge] (h) -- (b);

\draw[edge] (g) .. controls +(1,3)   ..  (e);
\draw[edge] (g) .. controls +(1,-3)  ..  (f);
\draw[edge] (h) .. controls (5,5)  .. (e);
\draw[edge] (h) .. controls (5,-1)  .. (f);
\coordinate [label=center:$G_8$] (G8) at (2,-3);
\end{tikzpicture}
	\caption{Forbidden graphs for $\Qs{{\cal G}}$}
	\label{fig:forbg}
\end{figure}
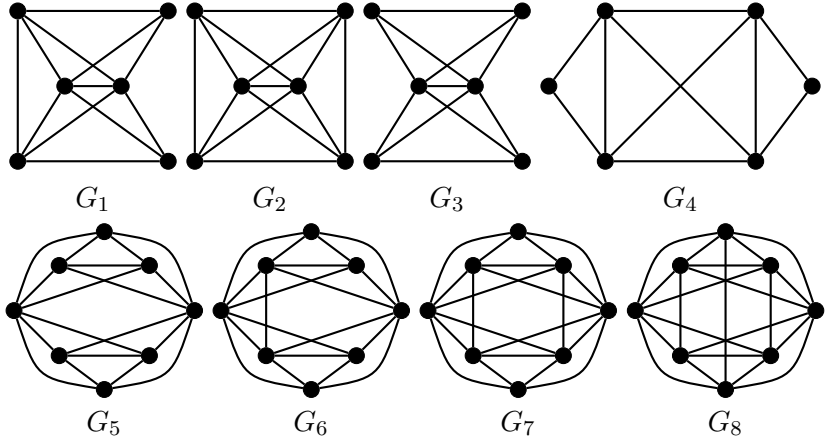

\begin{lemma}\label{lem:gamma}
$Q^*({\cal G})={\rm Free}(N)$, where $N$ is the set of eleven graphs consisting of $\overline{C}_3$, $\overline{C}_5$, $\overline{C}_7$ and the
eight graphs shown in Figure~\ref{fig:forbg}.   
\end{lemma}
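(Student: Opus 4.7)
The plan is to establish the two inclusions $\Qs{{\cal G}} \subseteq \Free(N)$ and $\Free(N) \subseteq \Qs{{\cal G}}$, exploiting the nice-partition characterisation of $\Qs{{\cal G}}$ stated just before the lemma.

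For the forward inclusion, I would begin by observing that any graph $H \in \Qs{{\cal G}}$ admits a nice partition into two cliques, is therefore co-bipartite, and hence contains no induced odd anti-hole; in particular $H$ is free of $\overline{C}_3$, $\overline{C}_5$, and $\overline{C}_7$. I would then verify directly that each of the eight graphs $G_1,\ldots,G_8$ of Figure~\ref{fig:forbg} admits no nice partition. Since each $G_i$ is itself co-bipartite, it has only a handful of partitions into two cliques (up to swapping the two parts); for each such partition $(U,W)$ I would exhibit either a vertex of $W$ with three or more neighbours in $U$ (violating (a)) or two vertices of $W$ sharing the same pair of neighbours in $U$ (violating (b)). Since admitting a nice partition is clearly preserved under taking induced subgraphs, no graph in $\Qs{{\cal G}}$ can contain any member of $N$.

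For the reverse inclusion, let $G \in \Free(N)$. I would first show that $G$ is co-bipartite. Beyond the three small anti-holes explicitly in $N$, one needs to rule out every longer odd anti-hole; I would argue by direct inspection that each $\overline{C}_{2k+1}$ with $k \geq 4$ contains one of $G_1,\ldots,G_8$ as an induced subgraph, exploiting the density of the latter. Consequently $\overline{G}$ is bipartite and we may write $V(G) = U \cup W$ with $U$ and $W$ cliques. Among all co-bipartite partitions of $V(G)$, I would choose one minimising a lexicographic measure, for instance the number of vertices of $W$ with three or more neighbours in $U$, tie-broken by the number of unordered pairs of $W$-vertices sharing the same two-element $U$-neighbourhood. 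If this extremal partition is nice, we are done; otherwise a vertex $w \in W$ with at least three neighbours in $U$, or a pair $w,w' \in W$ with identical two-element $U$-neighbourhoods, provides a seed. The extremality forbids the obvious repair (swapping $w$, $w'$, or some $u_i$ to the other side), which forces the presence of further vertices in prescribed adjacency roles; I would then show that these vertices, together with the seed, induce one of $G_1,\ldots,G_8$, contradicting $G \in \Free(N)$.

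The hard part is this last step. Every way in which (a) or (b) can fail, compounded with every obstruction to fixing it by a side-swap, must be matched with a specific forbidden witness among $G_1,\ldots,G_8$. I expect $G_1, G_2, G_3$ to arise from isolated violations of (a), $G_4$ from the purest violation of (b), and $G_5$--$G_8$ from more global configurations in which each attempted reassignment creates a new violation on the opposite side. The enumeration is finite and in principle mechanical, but ensuring that every obstruction is covered by exactly one of the eight graphs (and that the list is tight) is the main bookkeeping burden of the proof.
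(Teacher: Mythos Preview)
Your forward inclusion matches the paper's, and your observation that longer odd anti-holes contain a forbidden graph is correct (the paper uses $G_1$ specifically: it sits inside every $\overline{C}_i$ with $i\ge 9$).

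For the reverse inclusion, however, your strategy diverges from the paper's. You propose an extremal argument: pick a co-bipartite partition minimising a lexicographic badness measure, and when it is not nice, use the impossibility of local swaps to force one of $G_1,\ldots,G_8$. The paper does \emph{not} argue extremally. It starts from an \emph{arbitrary} co-bipartite partition $V_1\cup V_2$ and splits into two cases according to whether some $K_4$ has two vertices in each part. In Case~2 (no such $K_4$) the given partition is already nice; in Case~1 the paper ignores the original partition entirely and \emph{constructs} a new nice partition from scratch, by classifying all vertices according to their adjacencies to the four $K_4$-vertices and then reading off $(U,W)$ from this classification through a sequence of structural observations (each backed by a specific $G_i$). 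So the paper's argument is constructive case analysis, not an extremal/swapping argument.

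Your plan is plausible in outline, but there is a real risk you have not addressed: a vertex of $W$ can be moved to $U$ only if it is complete to $U$ (and dually), so the ``obvious repair'' may simply be unavailable rather than obstructed by extra structure; you would need to show that unavailability itself yields a forbidden configuration. Whether this can be made to terminate cleanly on exactly the list $G_1,\ldots,G_8$ is unclear from your sketch, and the bookkeeping you anticipate may end up reproducing much of the paper's Case~1 analysis anyway. The paper's crossing-$K_4$ dichotomy gives a cleaner organising principle than a badness measure here.
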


\begin{proof}
To show the inclusion $Q^*({\cal G})\subseteq {\rm Free}(N)$, we first observe that $\overline{C}_3$, $\overline{C}_5$ and $\overline{C}_7$
are forbidden in $Q^*({\cal G})$, since every graph in this class is co-bipartite, while $\overline{C}_3$, $\overline{C}_5$, $\overline{C}_7$ are not
co-bipartite. Each of the remaining eight graphs of the set $N$ is co-bipartite, but none of them admits a nice partition, which is a routine matter to check. 

To prove the inverse inclusion ${\rm Free}(N) \subseteq Q^*({\cal G})$, let us consider a graph $G$ in ${\rm Free}(N)$. 
By definition, $G$ contains no $\overline{C}_3$, $\overline{C}_5$, $\overline{C}_7$. Also, since $G_1$ is an induced 
subgraph of $\overline{C}_i$ with $i\ge 9$, we conclude that $G$ contains no complements of odd cycles of length $9$ or more. 
Therefore, $G$ is co-bipartite. Let $V_1\cup V_2$
be an arbitrary bipartition of $V(G)$ into two cliques. In order to show that $G$ belongs to $Q^*({\cal G})$, we split our analysis into several cases.

\medskip
{\it Case} 1: $G$ contains a $K_4$ induced by vertices $x_1,y_1\in V_1$ and $x_2,y_2\in V_2$. To analyze this case, we partition the vertices of $V_1$ into 
four subsets with respect to $x_2$, $y_2$ as follows:
\begin{itemize}
\item[$A_1$] is the set of vertices of $V_1$ adjacent to $x_2$ and non-adjacent to $y_2$,
\item[$B_1$] is the set of vertices of $V_1$ adjacent to $x_2$ and to $y_2$,
\item[$C_1$] is the set of vertices of $V_1$ adjacent to $y_2$ and non-adjacent to $x_2$,
\item[$D_1$] is the set of vertices of $V_1$ adjacent neither to $x_2$ nor to $y_2$.
\end{itemize}
We partition the vertices of $V_2$ with respect to $x_1$, $y_1$ into four subsets $A_2$, $B_2$, $C_2$, $D_2$ analogously.  We now observe the following.

\begin{itemize}
\item[(1)] For $i\in \{1,2\}$, either $A_i=\emptyset$ or $C_i=\emptyset$, since otherwise a vertex in $A_i$ and a vertex in $C_i$ together 
with $x_1,y_1,x_2,y_2$ induce $G_2$.
\end{itemize}
According to this observation, in what follows, we may assume, without loss of generality, that 
\begin{itemize}
\item $C_1=\emptyset$ and $C_2=\emptyset$.
\end{itemize}
We next observe that
\begin{itemize}
\item[(2)] Either $A_1=\emptyset$ or $A_2=\emptyset$, since otherwise a vertex $a_1\in A_1$ and a vertex $a_2\in A_2$ together 
with $x_1,y_1,x_2,y_2$ induce either $G_1$ (if $a_1$ is not adjacent to $a_2$) or $G_2$ (if $a_1$ is adjacent to $a_2$).
\end{itemize}
Observation (2) allows us to assume, without loss of generality, that 
\begin{itemize}
\item $A_2=\emptyset$.
\end{itemize} 
We further make the following conclusions:
\begin{itemize}
\item[(3)] For $i\in \{1,2\}$, $|D_i|\le 1$, since otherwise any two vertices of $D_i$ together with $x_1,x_2,y_1,y_2$ induce $G_3$.
\item[(4)] If $D_1=\{d_1\}$ and $D_2=\{d_2\}$, then $d_1$ is adjacent to $d_2$, since otherwise $d_1,d_2,x_1,x_2$, $y_1$, $y_2$ induce $G_4$.
\item[(5)] If $A_1\cup D_1\cup D_2\ne\emptyset$, then every vertex of $B_1$ is adjacent to every vertex of $B_2$. Indeed, assume, without loss of generality, 
that $z\in A_1\cup D_1$ and a vertex $b_1\in B_1$ is not adjacent to a vertex $b_2\in B_2$. Then the vertices $z,b_1,b_2,x_1,x_2,y_1$ induce either $G_1$ 
(if $z$ is not adjacent to $b_2$) or $G_2$ (if $z$ is adjacent to $b_2$).
\item[(6)] Either $A_1=\emptyset$ or $D_1=\emptyset$, since otherwise a vertex in $A_1$ and a vertex in $D_1$ together 
with $x_1,y_1,x_2,y_2$ induce $G_1$.
\end{itemize}
According to (6), we split our analysis into three subcases as follows.

\medskip
{\it Case} 1.1: $D_1=\{d_1\}$. Then $A_1=\emptyset$ (by (6)) and every vertex of $B_1$ is adjacent to every vertex of $B_2$ (by (5)).
If $D_2=\emptyset$, then $U=D_1$ and $W=B_1\cup B_2$ is a nice partition of $G$ (remember that $x_1,y_1\in B_1$ and $x_2,y_2\in B_2$). 

Now assume $D_2=\{d_2\}$ and denote by $B_1^0$ the vertices of $B_1$
nonadjacent to $d_2$ and by $B_1^1$ the vertices of $B_1$ adjacent to $d_2$. Similarly, we denote by $B_2^0$ the vertices of $B_2$
nonadjacent to $d_1$ and by $B_2^1$ the vertices of $B_2$ adjacent to $d_1$. Then $|B_1^1\cup B_2^1|\le 1$, since otherwise 
any two vertices of $B_1^1\cup B_2^1$ together with $x_1,x_2,d_1,d_2$ induce $G_2$. But then $U=D_1\cup D_2$ and $W=B_1\cup B_2$ is a nice partition of $G$.

\medskip
{\it Case} 1.2: $A_1\ne\emptyset$. Then $D_1=\emptyset$ (by (6)) and every vertex of $B_1$ is adjacent to every vertex of $B_2$ (by (5)).
In this case, we claim that 
\begin{itemize}
\item[(7)] every vertex of $B_2$ is either adjacent to every vertex of $A_1$ or to none of them. Indeed, assume a vertex $b_2\in B_2$ 
has a neighbour $a'\in A_1$ and a non-neighbour $a''\in A_1$. Then $b_2,a',a'',x_1,y_1,y_2$ induce $G_1$.
\end{itemize} 
We denote by $B_2^0$ the subset of vertices of $B_2$ that have no neighbours in $A_1$ and by $B_2^1$ the subset of vertices of $B_2$ 
adjacent to every vertex of $A_1$. Then 
\begin{itemize}
\item either $|A_1|= 1$ or $|B_2^0|= 1$, since otherwise any two vertices of $A_1$ together with any two vertices 
of $B_2^0$ and any two vertices of $B_1$ induce $G_3$. 
\item if $D_2=\{d_2\}$, then $|B_2^1|= 1$, since otherwise any two vertices of $B_2^1$ together with $d_2,x_1,y_2$ and any vertex $a$ in $A_1$
induce either $G_1$ (if $a$ is not adjacent to $d_2)$) or $G_2$ (if $a$ is adjacent to $d_2)$).
\item if $D_2=\{d_2\}$, then $d_2$ has no neighbours in $B_1$. Indeed, if  $d_2$ has a neighbour $b_1\in B_1$, then vertices $b_1,d_2,x_1,x_2,y_2$
together with any vertex $a_1\in A_1$ induce either $G_1$ (if $d_2$ is not adjacent to $a_1$) or $G_2$ (if $d_2$ is adjacent to $a_1$).  
\end{itemize}
Therefore, either $U=A_1\cup D_2$, $W=B_1\cup B_2$ (if $|A_1|=1$) or $U=B_2^0\cup D_2$, $W=A_1\cup B_1\cup B_2^1$ (if $|B_2^0|=1$) is a nice partition of $G$.

\medskip
{\it Case} 1.3: $A_1=\emptyset$ and $D_1=\emptyset$. In this case, if $D_2\ne\emptyset$, then $U=D_2$, $W=B_1\cup B_2$ is a nice partition of $G$, 
since $B_1\cup B_2$ is a clique (by (5)).
Assume now that $D_2=\emptyset$. If $B_1\cup B_2$ is a clique, then $G$ has a trivial nice partition. Suppose next that $B_1\cup B_2$ is not a clique.
If all non-edges of $G$ are incident to a same vertex, say $b$ (i.e. $b$ is incident to all the edges of $\overline{G}$),
then $U=\{b\}$, $W=(B_1\cup B_2)-\{b\}$ is a nice partition of $G$. Otherwise, $G$ contains a pair of non-edges $b_1'b_2'\not\in E(G)$ and $b_1''b_2''\not\in E(G)$
with all four vertices $b_1',b_1''\in B_1$, $b_2',b_2''\in B_2$ being distinct (i.e. $b_1'b_2'$ and $b_1''b_2''$ form a matching in $\overline{G}$). 
We observe that $\{b_1',b_1'',b_2',b_2''\}\cap \{x_1,y_1,x_2,y_2\}=\emptyset$, because by definition vertices $x_1,y_1,x_2,y_2$ dominate the set $B_1\cup B_2$.
But then  $b_1',b_1'',b_2',b_2'',x_1,y_1$ induce either $G_2$ (if both $b_1'b_2''$ and $b_2'b_1''$ are edges in $G$) or 
$G_1$ (if exactly one of $b_1'b_2''$ and $b_2'b_1''$ is an edge in $G$) or $G_3$ (if neither $b_1'b_2''$ nor $b_2'b_1''$ is an edge in $G$).
This completes the proof of Case~1.

\medskip
{\it Case} 2: $G$ contains no $K_4$ with two vertices in $V_1$ and two vertices in $V_2$. We claim that in this case $V_1\cup V_2$ is a nice partition of $G$.
First, the assumption of case 2 implies that that no two vertices in the same part of the bipartition $V_1\cup V_2$ have two common neighbours in the opposite part,
verifying condition (b) of the definition of nice partition. To verify condition (a), it remains to prove that one of the parts $V_1$ and $V_2$ has no vertices 
with more than two neighbours in the opposite part. Assume the contrary and let $a_1\in V_1$ have three neighbours in $V_2$ and let $a_2\in V_2$ have three neighbours in $V_1$.

First, suppose $a_1$ is adjacent to $a_2$. Denote by $b_2,c_2$ two other neighbours of $a_1$ in $V_2$ and by $b_1,c_1$ two other neighbours of $a_2$ in $V_1$.
Then there are no edges between $b_1,c_1$ and $b_2,c_2$, since otherwise we are in conditions of Case 1. But now $a_1,b_1,c_1,a_2,b_2,c_2$ induce a $G_3$.

Suppose now that  $a_1$ is not adjacent to $a_2$. We denote by $b_2,c_2,d_2$ three neighbours of $a_1$ in $V_2$ and by $b_1,c_1,d_1$ three neighbours of $a_2$ in $V_1$.
No two edges between $b_1,c_1,d_1$ and $b_2,c_2,d_2$ (if any) share a vertex,  since otherwise we are in conditions of Case 1. But then $a_1,b_1,c_1,d_1,a_2,b_2,c_2,d_2$ 
induce either $G_5$ or $G_6$ or $G_7$ or $G_8$. This contradiction completes the proof of the lemma. 
\end{proof}

\medskip
Now we are ready to prove the main result of the section.

\begin{theorem}\label{thrm:boundary}
If $\mathrm{P}\neq \mathrm{NP}$, then $Q^*(\mathcal{S})$ is a boundary class for the {\sc upper dominating set} problem.
\end{theorem}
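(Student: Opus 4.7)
The plan is to verify both conditions defining a boundary class: that $\Qs{\mathcal{S}}$ is a limit class for the {\sc upper dominating set} problem, and that no hereditary proper subclass of $\Qs{\mathcal{S}}$ shares this property.

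For the limit-class condition I would consider the sequence $\Qs{Z_3} \supseteq \Qs{Z_4} \supseteq \cdots$, whose inclusions follow from $Z_{k+1} \subseteq Z_k$ and the monotonicity of $Q^*$. Each $\Qs{Z_k}$ is UD-tough via the transformation $G \mapsto Q(G)$: {\sc minimum dominating set} is NP-hard on $Z_k$ by Lemma~\ref{lem:Zk}, and Lemma~\ref{lem:Q} turns this into NP-hardness of {\sc upper domination} on $Q(Z_k) \subseteq \Qs{Z_k}$ (the finitely many inputs on which $\Gamma(Q(G))<3$ are handled directly). The less trivial task is the intersection equality $\bigcap_k \Qs{Z_k} = \Qs{\mathcal{S}}$. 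The containment $\supseteq$ is immediate. For the reverse, a graph $G$ in the intersection belongs to $\Qs{\mathcal{G}} = \Free(N)$ by Lemma~\ref{lem:gamma} and so admits a nice partition; each embedding $G \ssi Q(H_k)$ with $H_k \in Z_k$ induces one. Since $V(G)$ admits only finitely many nice partitions, a pigeonhole step fixes a single partition $(U^*, W^*)$ occurring for infinitely many $k$. I then associate with $(U^*, W^*)$ the natural underlying graph $H_G$ whose vertex set extends $U^*$ by two auxiliary vertices per $W^*$-vertex of degree $<2$ in $U^*$ and whose edges are indexed by $W^*$; this $H_G$ embeds as a subgraph of each corresponding $H_k$, so monotonicity of $Z_k$ together with the decreasing nature of the sequence yield $H_G \in \bigcap_k Z_k = \mathcal{S}$, and $G \ssi Q(H_G)$ places $G$ in $\Qs{\mathcal{S}}$.

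For minimality I would argue by contradiction. Suppose $X \subsetneq \Qs{\mathcal{S}}$ is a hereditary limit class for UD. Pick $G^* \in \Qs{\mathcal{S}} \setminus X$ and fix $H^* \in \mathcal{S}$ with $G^* \ssi Q(H^*)$; hereditariness of $X$ forces $Q(H^*) \notin X$, so $X \subseteq \Free(\{Q(H^*)\})$. Combined with Lemma~\ref{lem:gamma} this gives $X \subseteq Y$, where $Y := \Free(N \cup \{Q(H^*)\})$ is finitely defined, and Lemma~\ref{lem:limit} then forces $Y$ to be UD-tough. To derive the contradiction I would show $Y$ has bounded clique-width, so that Courcelle's theorem (applied to the easily MSO-expressible property ``$D$ is a minimal dominating set of maximum size'') makes UD polynomial-time on $Y$. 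The direct implication $H^* \subseteq H \Rightarrow Q(H^*) \ssi Q(H)$, obtained by restricting $Q(H)$ to $V(H^*) \cup E(H^*)$, guarantees that every $Q$-graph in $Y$ has the form $Q(H)$ with $H$ avoiding $H^*$ as a subgraph. The monotone class $\mathcal{Y}$ of such $H$ satisfies $H^* \in \mathcal{S} \setminus \mathcal{Y}$, so $\mathcal{S} \not\subseteq \mathcal{Y}$; Lemma~\ref{lem:monotone} thus bounds the clique-width of $\Qs{\mathcal{Y}}$, hence of the $Q$-graphs of $Y$, and Lemma~\ref{lem:cw} extends the bound to all of $Y \subseteq \Qs{\mathcal{G}}$, the desired contradiction.

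The hardest step, I expect, is the intersection computation at the end of the first part: the pigeonhole argument hinges on associating a canonical underlying graph $H_G$ to a fixed nice partition of $G$ and then verifying that $W$-vertices with $0$ or $1$ neighbour in $U^*$, which in $H_k$ correspond to edges with one or two endpoints outside the image of $V(G)$, can be faithfully absorbed by the auxiliary vertices of $H_G$ in a way that makes $H_G$ simultaneously a subgraph of each $H_k$ and rich enough to satisfy $G \ssi Q(H_G)$.
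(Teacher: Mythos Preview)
Your approach mirrors the paper's almost exactly: the minimality argument---forbid $Q(H^*)$, invoke Lemma~\ref{lem:limit} on the finitely defined superclass $\Free(N\cup\{Q(H^*)\})$, then bound its clique-width via Lemmas~\ref{lem:monotone} and~\ref{lem:cw}---is precisely what the paper does. The paper treats the convergence $\bigcap_k \Qs{Z_k} = \Qs{\mathcal{S}}$ as routine (``it is not hard to verify''), so your explicit pigeonhole argument is added detail rather than a different route.

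On the concern you flag about auxiliary vertices: it is genuine. Your $H_G$, built with \emph{fresh} auxiliary endpoints for every degree-deficient $w\in W^*$, need not embed as a subgraph of $H_k$, because two such $w$'s may map to edges of $H_k$ sharing a missing endpoint (e.g.\ two $W^*$-vertices of degree~$0$ in $U^*$ could land on edges $ab$ and $ac$ of $H_k$, giving a $P_3$ rather than the $2K_2$ your $H_G$ demands). The clean fix is to pigeonhole one level up: from each embedding $G \ssi Q(H_k)$ pass to the subgraph $H_k' \subseteq H_k$ spanned by the images of $U^*$ together with all endpoints of the edges indexed by $W^*$. This $H_k'$ has at most $|U^*|+2|W^*|\le 3|V(G)|$ vertices, lies in $Z_k$ by monotonicity, and still satisfies $G \ssi Q(H_k')$. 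Since only finitely many isomorphism types occur, some $H^*$ recurs for infinitely many $k$, hence lies in $\bigcap_k Z_k=\mathcal{S}$, and $G\ssi Q(H^*)$ finishes the inclusion without any auxiliary-vertex bookkeeping.
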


\begin{proof}
From Lemmas~\ref{lem:Zk} and~\ref{lem:Q} we know that {\sc upper domination} is NP-hard in the class $Q^*(Z_k)$
for all values of $k\ge 3$. Also, it is not hard to verify that the sequence of classes $Q^*(Z_1),Q^*(Z_2)\ldots$ converges to $Q^*({\cal S})$.
Therefore, $Q^*({\cal S})$ is a limit class for the {\sc upper dominating set} problem. To prove its minimality, 
assume there is a limit class $X$ which is properly contained in $Q^*(\cal S)$. We consider a graph 
$F\in Q^*({\cal S})-X$, a graph $G\in Q({\cal S})$ containing $F$ as an induced subgraph (possibly $G=F$ if $F\in Q({\cal S})$) and
a graph $H\in {\cal S}$ such that $G=Q(H)$.  
From the choice of $G$ and Lemma~\ref{lem:gamma}, we know that $X \subseteq \Free(N\cup \{G\})$, where $N$ is the set of 
minimal forbidden induced subgraphs for the class $Q^*({\cal G})$. Since the set $N$ is finite (by Lemma~\ref{lem:gamma}), 
we conclude with the help of Lemma~\ref{lem:limit} that the {\sc upper dominating set} problem is NP-hard in the class $\Free(N\cup \{G\})$.
To obtain a contradiction, we will show that graphs in $\Free(N\cup \{G\})$ have bounded clique-width.

Denote by $M$ the set of all graphs containing $H$ as a spanning subgraph. Clearly $\Free(M)$ is a monotone class. 
More precisely, it is the class of graphs containing no $H$ as a subgraph (not necessarily induced). 
Since $\Free(M)$ is monotone and $\cal S\not\subset \Free(M)$ (as $H\in {\cal S}$), we know from Lemma~\ref{lem:monotone}
that the clique-width is bounded in $\Qs{\Free(M)}$. 

To prove that graphs in $\Free(N\cup \{G\})$ have bounded clique-width, we will show that $Q$-graphs in this class belong to  $\Qs{\Free(M)}$.
Let $Q(H')$ be a $Q$-graph in $\Free(N\cup \{G\})$. Since the vertices of $Q(H')$ represent the vertices and the edges of $H'$
and $Q(H')$ does not contain $G$ as an induced subgraph, we conclude that $H'$ does not contain $H$ as a subgraph. 
Therefore, $H'\in \Free(M)$, and hence $Q(H')\in Q(\Free(M))$. By Lemma~\ref{lem:cw}, this implies that all graphs  
in $\Free(N\cup \{G\})$ have bounded clique-width. This contradicts the fact that the {\sc upper dominating set} problem is NP-hard in this class
and completes the proof of the theorem.
\end{proof}

\section{A dichotomy in monogenic classes}
\label{sec:poly}

The main goal of this section is to show that in the family of monogenic classes the {\sc upper dominating set} problem
admits a dichotomy, i.e. for each graph $H$, the problem is either polynomial-time solvable or NP-hard for $H$-free graphs. 
We start with polynomial-time results.

\subsection{Polynomial-time results}

As we have mentioned in the introduction, the {\sc upper dominating set} problem can be 
solved in polynomial time for bipartite graphs \cite{CockayneFPT81}, chordal graphs \cite{JacobsonP90} 
and generalized series-parallel graphs \cite{HareHLPPW87}. It also admits a polynomial-time 
solution in any class of graphs of bounded clique-width \cite{Courcelle}. 
Since $P_4$-free graphs have clique-width at most 2 (see e.g. \cite{cliqie-width}), we make the following conclusion.

\begin{proposition}\label{obs:1}
The {\sc upper dominating set} problem can be solved for $P_4$-free graphs in polynomial time.  
\end{proposition}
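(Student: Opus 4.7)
The plan is to observe that the proposition is an immediate corollary of the two facts assembled in the paragraph preceding it: every $P_4$-free graph has clique-width at most $2$, and the \textsc{upper dominating set} problem is solvable in polynomial time on any class of bounded clique-width via Courcelle's meta-theorem. Hence the proof should simply consist of citing the clique-width bound for cographs and then invoking the meta-theorem.

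To make the invocation rigorous, I would first check that the property ``$D$ is a minimal dominating set'' is expressible in monadic second-order logic with set quantification over vertices, and that maximizing $|D|$ falls under the Courcelle--Makowsky--Rotics optimization extension. The former is routine: $D$ is dominating via $\forall v\,(v\in D \lor \exists u\,(u\in D \land uv\in E))$, and minimality can be rewritten, using Lemma~\ref{lem:1}, as the requirement that every $x\in D$ either is isolated in $G[D]$ and has no neighbour in $V\setminus D$ that is itself adjacent to another vertex of $D$, or else possesses a private neighbour outside $D$; all of this is first order and can be spelled out if desired. A clique-width expression of width at most $2$ for a $P_4$-free input graph can moreover be produced from its cotree in linear time, so the meta-theorem yields a concrete polynomial-time algorithm.

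There is essentially no obstacle here, since the heavy lifting was already done in the results being quoted. A self-contained alternative would be a dynamic programme on the cotree of the input cograph, where at each internal node one maintains a small table indexed by the ``private-neighbour status'' of the vertices of the current candidate set; the only delicate case is the join operation, at which private neighbours from one side can be killed by the universal adjacencies to the other side. This bookkeeping is precisely what the MSO/clique-width route subsumes, which is why the short proof is the natural one to record.
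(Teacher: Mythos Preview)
Your proposal is correct and follows exactly the paper's own argument: the paper does not give a separate proof of this proposition but simply notes in the preceding sentence that $P_4$-free graphs have clique-width at most~$2$ and that bounded clique-width implies polynomial-time solvability via \cite{Courcelle}. Your additional remarks on MSO-expressibility and the cotree dynamic programme are accurate elaborations but go beyond what the paper records.
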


In what follows, we develop a polynomial-time algorithm to solve the problem in the class of $2K_2$-free graphs. 

We start by observing that the class of $2K_2$-free graphs admits a polynomial-time 
solution to the {\sc maximum independent set} problem (see e.g. \cite{2K2}). By Lemma~\ref{lem:1}
every maximal (and hence maximum) independent set is a minimal dominating set. These observations 
allow us to restrict ourselves to the analysis of minimal dominating sets $X$ such that 
\begin{itemize}
\item $X$ contains at least one edge,
\item $|X|>\alpha(G)$,
\end{itemize}
where $\alpha(G)$ is the independence number, i.e. the size of a maximum independent set in $G$.

Let $G$ be a $2K_2$-free graph and let $ab$ be an edge in $G$. Assuming that $G$ contains a minimal dominating 
set $X$ containing both $a$ and $b$, we first explore some properties of $X$. In our analysis we use
the following notation. We denote by
\begin{itemize}
\item  $N$ the neighbourhood of $\{a,b\}$, i.e. the set of vertices outside of 
$\{a,b\}$ each of which is adjacent to at least one vertex of $\{a,b\}$,
\item $A$ the anti-neighbourhood of $\{a,b\}$, i.e. the set of vertices adjacent neither to $a$ nor to $b$,
\item $Y:=X\cap N$,  
\item $Z:=N(Y)\cap A$, i.e. the set of vertices of $A$ each of which is adjacent to at least one vertex of $Y$.
\end{itemize}

Since $a$ and $b$ are adjacent, by Lemma~\ref{lem:1} each of them has a private neighbour outside of $X$.
We denote by 
\begin{itemize}
\item $a^*$ a private neighbour of $a$, 
\item $b^*$ a private neighbour of $b$. 
\end{itemize}
By definition, $a^*$ and $b^*$ belong to $N-Y$ and have no neighbours in $Y$. 
Since $G$ is $2K_2$-free, we conclude that

\medskip
\noindent 
{\it Claim} 1.
$A$ is an independent set.

\medskip
We also derive a number of other helpful claims. 

\medskip
\noindent
{\it Claim} 2.
$Z\cap X = \emptyset$ and $A-Z \subseteq X$.

\begin{proof}
Assume a vertex $z\in Z$ belongs  to $X$. Then $X-\{z\}$ is a dominating set, because  $z$ does not dominate any vertex of 
$A$ (since $A$ is independent) and it is dominated by its neighbor in $Y$. This contradicts the minimality of $X$
and proves that $Z\cap X = \emptyset$. Also, by definition, no vertex of  $A-Z$ has a neighbour in $Y\cup \{a,b\}$. Therefore, 
to be dominated $A-Z$ must be included in $X$. 
\end{proof}

\medskip
\noindent 
{\it Claim} 3.
If $|X| > \alpha(G)$, then $|Y| = |Z|$ and every vertex of $Z$ is a private neighbor of a vertex in $Y$.

\begin{proof}
Since every vertex $y$ in $Y$ belongs to $X$ and has a neighbour in $X$ ($a$ or $b$), by Lemma~\ref{lem:1} $y$ must have a private neighbor in $Z$.
Therefore, $|Z| \geq |Y|$. If $|Z|$ is strictly greater than $|Y|$, then $|X|\le |A\cup \{a\}|\le \alpha(G)$ (since $A$ is independent),
which contradicts the assumption  $|X| > \alpha(G)$. Therefore, $|Y| = |Z|$ and every vertex of $Z$ is a private neighbor of a vertex in $Y$.
\end{proof}

\medskip
\noindent 
{\it Claim} 4.
If $|Y| > 1$ and $|X| > \alpha(G)$, then $Y \subseteq N(a) \cap N(b)$.

\begin{proof}
Let $y_1, y_2$ be two vertices in $Y$ and let $z_1, z_2$ be two vertices in $Z$ which are private neighbours of $y_1$ and $y_2$, respectively. 

Assume $a$ is not adjacent to $y_1$, then $b$ is adjacent to $y_1$ (by definition of $Y$) and $a^*$ is adjacent to $z_1$, 
since otherwise the vertices $a, a^*, y_1, z_1$ induce a $2K_2$ in $G$. Also, $a^*$ is adjacent to $z_2$, since otherwise 
a $2K_2$ is induced by $a^*,z_1,y_2,z_2$. But now the vertices $a^*, z_2,b,y_1$ induce a $2K_2$. This contradiction shows that $a$ is adjacent to $y_1$. 
Since $y_1$ has been chosen arbitrarily, $a$ is adjacent to every vertex of $Y$, and by symmetry, $b$ is adjacent to every vertex of $Y$.
\end{proof}

\medskip
\noindent 
{\it Claim} 5.
If $|Y| > 1$ and $|X| > \alpha(G)$, then $a^*$ and $b^*$ have no neighbours in~$Z$.

\begin{proof}
Assume by contradiction that $a^*$ is adjacent to a vertex $z_1\in Z$. By Claim~3, 
$z_1$ is a private neighbour of a vertex $y_1\in Y$. Since $|Y| > 1$, there exists another vertex $y_2\in Y$ 
with a private neighbor $z_2\in Z$. From Claim~4, 
we know that $b$ is adjacent to $y_2$.
But then the set $\{b,y_2,a^*,z_1\}$ induces a $2K_2$. This contradiction shows that $a^*$ has no neighbours in~$Z$.
By symmetry, $b^*$ has no neighbours in in~$Z$.
\end{proof}

The above series of claims leads to the following conclusion, which plays a key role for the development of 
a polynomial-time algorithm. 

\begin{lemma}\label{lemma:main1}
If $|X| > \alpha(G)$, then $|Y|=1$ and $Y \subseteq N(a) \cap N(b)$. 
\end{lemma}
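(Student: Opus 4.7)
The plan is to exploit the identity $|X| = |A| + 2$, which follows from the disjoint decomposition $X = \{a,b\} \cup Y \cup (A - Z)$ (Claim~2) together with $|Y| = |Z|$ (Claim~3), by constructing, in every case that would contradict the desired conclusion, an independent set of size $|A| + 2$. Since by hypothesis $|X| > \alpha(G)$, each such construction yields a contradiction.

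First I would dispose of the extreme cases $|Y| = 0$ and $|Y| \geq 2$ together by exhibiting the set $I = A \cup \{a, b^*\}$. It is independent because $A$ is independent by Claim~1; $a$ has no neighbour in $A$ by the definition of $A$; $b^*$ is private to $b$ and so avoids $a$ as well as $A - Z \subseteq X$; and $b^*$ has no neighbour in $Z$ either, trivially when $Y = \emptyset$ and by Claim~5 when $|Y| \geq 2$. Since $|I| = |A| + 2$, this rules out both possibilities and forces $|Y| = 1$.

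Now, writing $Y = \{y_1\}$ and $Z = \{z_1\}$, suppose for contradiction that $y_1 \notin N(a) \cap N(b)$. By the definition of $Y$, $y_1$ is adjacent to at least one of $a, b$; say $y_1 \in N(a) - N(b)$. I would then take $I' = (A - \{z_1\}) \cup \{b, y_1, a^*\}$ and verify independence as follows: $A - \{z_1\}$ is independent; $b$ has no neighbour in $A$ by the definition of $A$; $y_1$ has no neighbour in $A - \{z_1\}$ because the neighbours of $y_1$ in $A$ all lie in $Z = \{z_1\}$ by the definition of $Z$; $a^*$ is private to $a$ and hence adjacent to no vertex of $X \setminus \{a\}$, a set containing $b$, $y_1$ and $A - Z$; and $b y_1 \notin E$ holds by assumption. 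Then $|I'| = |A| + 2 = |X|$, again a contradiction, so $y_1$ must be adjacent to both $a$ and $b$.

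The main subtlety is in the $|Y| = 1$ analysis: the natural candidate $A \cup \{a, b^*\}$ fails because $b^*$ may now be adjacent to $z_1$ and Claim~5 is unavailable. One must instead replace $b^*$ by $b$ itself, which in turn forces $y_1$ into the independent set, an inclusion that becomes safe precisely under the contrary assumption $b y_1 \notin E$, with $a^*$ providing the last vertex needed to reach the critical size $|A| + 2$. The $2K_2$-freeness hypothesis enters only indirectly, through its earlier use to derive Claims~1, 3, 4, and~5 on which the counting argument rests.
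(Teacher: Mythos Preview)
Your proof is correct and follows essentially the same approach as the paper: in each bad case you build an independent set of size $|X|=|A|+2$ to contradict $|X|>\alpha(G)$. The only cosmetic differences are that the paper uses $A\cup\{a^*,b\}$ where you use the symmetric $A\cup\{a,b^*\}$, and the paper treats $|Y|=0$ and $|Y|>1$ as two separate (but nearly identical) paragraphs rather than merging them as you do; for the $|Y|=1$ step your set $(A-\{z_1\})\cup\{b,y_1,a^*\}$ is literally the same as the paper's $(A-Z)\cup\{a^*,b,y\}$.
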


\begin{proof}
First, we show that $|Y|\le 1$. Assume to the contrary that $|Y| > 1$.
By definition of $a^*$ and Claim~2, 
vertex $a^*$ has no neighbours in $A-Z$, and by Claim~5, 
$a^*$ has no neighbours in $Z$. Therefore, $A\cup \{a^*,b\}$ is 
an independent set of size $|X|=|Y|+|A-Z|+2$. This contradicts the assumption that 
$|X| > \alpha(G)$ and proves that $|Y|\le 1$.

Suppose now that $|Y|=0$. Then, by Claim~3, $|Z|=0$ and hence, by Claim~2, $X=A\cup \{a,b\}$.
Also, by definition of $a^*$, vertex $a^*$ has no neighbours in $A$. But then $A\cup \{a^*,b\}$ is 
an independent set of size $|X|$, contradicting that $|X| > \alpha(G)$.

From the above discussion we know that $Y$ consists of a single vertex, say $y$. It remains to show that
$y$ is adjacent to both $a$ and $b$. By definition, $y$ must be adjacent to at least one of them, say to $a$. 
Assume that $y$ is not adjacent to $b$. By definition of $a^*$, vertex $a^*$ has no neighbours in $\{y\}\cup (A-Z)$,
and by definition of $Z$, vertex $y$ has no neighbours in $A-Z$. But then $(A-Z)\cup \{a^*,b,y\}$ is an independent 
set of size $|X|=|Y|+|A-Z|+2$. This contradicts the assumption that 
$|X| > \alpha(G)$ and shows that $y$ is adjacent to both $a$ and $b$.
\end{proof}

\begin{corollary}\label{cor}
If a minimal dominating set in a $2K_2$-free graph $G$ is larger than $\alpha(G)$, 
then it consists of a triangle and all the vertices not dominated by the triangle.   
\end{corollary}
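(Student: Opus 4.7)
The plan is to derive the corollary as an immediate packaging of the preceding lemma together with Claim~2. Let $X$ be a minimal dominating set with $|X|>\alpha(G)$. I first observe that $X$ cannot be independent, since otherwise $|X|\le\alpha(G)$; hence $X$ contains at least one edge, which I call $ab$. I then apply the lemma to this edge: it asserts that $Y=X\cap N$ consists of a single vertex $y$ which is adjacent to both $a$ and $b$. Consequently $T:=\{a,b,y\}$ is a triangle contained in $X$.

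Next I recover the remaining vertices of $X$. From the partition $V=\{a,b\}\cup N\cup A$, I get $X=\{a,b\}\cup Y\cup(X\cap A)=\{a,b,y\}\cup(X\cap A)$. Claim~2 states that $Z\cap X=\emptyset$ and $A-Z\subseteq X$, so $X\cap A=A-Z$, where $Z=N(Y)\cap A=N(y)\cap A$ since $Y=\{y\}$. Therefore $X=T\cup(A-Z)$. It remains to check that $A-Z$ equals the set of vertices not dominated by $T$. Because $y$ is adjacent to $a$, we have $y\notin A$, so $A$ is disjoint from $T$; and since every vertex of $A$ is non-adjacent to both $a$ and $b$ by definition, its only potential neighbor inside $T$ is $y$. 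Hence a vertex of $A$ fails to be dominated by $T$ precisely when it is not in $N(y)$, i.e.\ when it lies in $A\setminus N(y)=A-Z$. Vertices in $N-Y$ are all dominated by $\{a,b\}\subset T$ and contribute nothing to the undominated set. This gives the desired identification $X\setminus T=A-Z=\{v\in V\setminus T:\ v\text{ is not dominated by }T\}$.

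There is essentially no serious obstacle: the heavy lifting has already been done in the lemma and in Claims~1--5. The only points requiring a little care are the observations that $|X|>\alpha(G)$ forces the existence of an edge in $X$ (so that the lemma applies), and that $y\notin A$, which ensures the triangle $T$ and the set $A-Z$ are disjoint and together exhaust $X$.
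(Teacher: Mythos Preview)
Your proposal is correct and follows exactly the intended route: the paper states the corollary without proof, leaving it as an immediate consequence of Lemma~\ref{lemma:main1} and Claim~2, which is precisely how you assemble it. Your care in noting that $|X|>\alpha(G)$ forces an edge in $X$ (so the framework with $a,b$ applies) and in verifying that $A-Z$ coincides with the set of vertices undominated by the triangle $T=\{a,b,y\}$ fills in the only details the reader would need.
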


In what follows, we describe an algorithm~$\mathcal{A}$ to find a minimal dominating set $M$ with maximum cardinality in a $2K_2$-free 
graph $G$ in polynomial time. In the description of the algorithm, given a graph $G=(V,E)$ and a subset $U\subseteq V$, we denote by $A(U)$
the anti-neighbourhood of $U$, i.e. the subset of vertices of $G$ outside of $U$ none of which has a neighbour in $U$. 

\medskip
\noindent
{\it Algorithm}~$\mathcal{A}$

\begin{flushleft}
{\bf Input:} A $2K_2$-free graph $G=(V,E)$.\\
{\bf Output:} A minimal dominating set $M$ in $G$ with maximum cardinality.
\end{flushleft}
\begin{enumerate}
	\item Find a maximum independent set $M$ in $G$.  

	\item  For each triangle $T$ in $G$:
	\begin{itemize}
		\item Let $M' := T \cup A(T)$.
		\item If $M'$ is a minimal dominating set and $|M'| > |M|$, then $M := M'$.
	\end{itemize}
	\item Return $M$.
\end{enumerate}

\begin{theorem}\label{thm:poly}
Algorithm~$\mathcal{A}$ correctly solves the {\sc upper dominating set} problem for $2K_2$-free graphs in polynomial time.  
\end{theorem}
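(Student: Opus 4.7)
The plan is to split the analysis into two cases depending on whether an upper dominating set of $G$ has size at most $\alpha(G)$ or strictly more. Let $M^*$ denote an upper dominating set of $G$, so $|M^*|=\Gamma(G)$. If $\Gamma(G)\le \alpha(G)$, then by Lemma~\ref{lem:mm} any maximum independent set of $G$ is itself a minimal dominating set of size $\alpha(G)\ge \Gamma(G)$, so the set $M$ computed in step~1 already has optimal size. Since the {\sc maximum independent set} problem admits a polynomial-time algorithm on $2K_2$-free graphs (cited in the text), step~1 can be carried out in polynomial time. In the opposite case $\Gamma(G)>\alpha(G)$, I would invoke Corollary~\ref{cor}: $M^*$ must equal $T\cup A(T)$ for some triangle $T$ of $G$ (the triangle is $\{a,b,y\}$ produced by Lemma~\ref{lemma:main1}). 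Hence when step~2 enumerates this particular $T$, it forms $M'=T\cup A(T)=M^*$, verifies that it is a minimal dominating set (which it is, being an upper dominating set), finds $|M'|>|M|$, and overwrites $M$ by $M^*$. Thus in either case the algorithm returns a set of size $\Gamma(G)$.

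It remains to verify that the returned $M$ is always itself a minimal dominating set, not merely an independent set or a dominated candidate of wrong size. The set produced in step~1 is a maximal independent set, hence a minimal dominating set by Lemma~\ref{lem:mm}. Every subsequent update occurs only after the explicit test ``$M'$ is a minimal dominating set'', so the invariant that $M$ is a minimal dominating set is preserved throughout the loop. Combined with the cardinality argument of the previous paragraph, this yields correctness.

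For the running time, step~1 takes polynomial time as noted. In step~2 the number of triangles in $G$ is at most $\binom{n}{3}=O(n^3)$, and they can be enumerated in $O(n^3)$ time by inspecting all triples of vertices. For each triangle $T$, the anti-neighbourhood $A(T)$ is computed in $O(n^2)$ time by scanning the adjacency relation, and testing whether $T\cup A(T)$ is a dominating set and whether every one of its vertices has a private neighbour (equivalently, minimality) is done in $O(n^2)$ time. Hence step~2 runs in $O(n^5)$ time overall, and the algorithm terminates in polynomial time. The main obstacle in this argument is really concentrated in Corollary~\ref{cor}, which has already been proved above; once that structural dichotomy between ``small'' ($\le\alpha(G)$) and ``large'' ($>\alpha(G)$) upper dominating sets is in hand, the correctness of Algorithm~$\mathcal{A}$ becomes an immediate case analysis.
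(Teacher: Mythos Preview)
Your proof is correct and follows essentially the same approach as the paper: both argue via the dichotomy $\Gamma(G)\le\alpha(G)$ versus $\Gamma(G)>\alpha(G)$, invoke Corollary~\ref{cor} for the second case, and arrive at the same $O(n^5)$ bound. Your write-up is in fact slightly more careful than the paper's, since you make explicit the invariant that $M$ remains a minimal dominating set throughout the loop.
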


\begin{proof}
Let $G$ be a $2K_2$-free graph with $n$ vertices. In $O(n^2)$ time, one can find a maximum independent set $M$ in $G$ (see e.g. \cite{2K2}).
Since $M$ is also a minimal dominating set (see Lemma \ref{lem:mm}), any solution of size at most $\alpha(G)$ can be ignored.

If $X$ is a solution of size more than  $\alpha(G)$, then, by Corollary~\ref{cor}, 
it consists of a triangle $T$ and its anti-neighbourhood $A(T)$. For each triangle $T$, verifying whether $T \cup A(T)$
is a minimal dominating set can be done in $O(n^2)$ time. Therefore, the overall time complexity of the algorithm can be 
estimated as $O(n^5)$.
\end{proof}

\subsection{The dichotomy}
\label{sec:main}

In this section, we summarize the results presented earlier in order to obtain the following dichotomy.

\begin{theorem}
Let $H$ be a graph. If $H$ is a $2K_2$ or $P_4$ (or any induced subgraph of $2K_2$ or $P_4$), then 
the {\sc upper dominating set} problem can be solved for $H$-free graphs in polynomial time. 
Otherwise the problem is NP-hard for $H$-free graphs.
\end{theorem}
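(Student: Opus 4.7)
The proof naturally splits into the polynomial part and the NP-hard part. For the polynomial part, observe that if $H$ is an induced subgraph of a graph $F$, then $\Free(H)\subseteq\Free(F)$. Applying this with $F=P_4$ (Proposition~\ref{obs:1}) and $F=2K_2$ (Theorem~\ref{thm:poly}) covers every graph in the ``good'' list
\[
\{K_1,\ 2K_1,\ K_2,\ K_2+K_1,\ P_3,\ 2K_2,\ P_4\},
\]
which is exactly the set of induced subgraphs of $2K_2$ or $P_4$.

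For the NP-hard direction I would exploit the two tough classes established in Section~\ref{sec:NP}: the class $\mathcal{B}$ of complements of bipartite graphs (Theorem~\ref{thm:2}) and the class $\mathcal{C}$ of planar graphs with maximum degree at most $6$ and girth at least $6$ (Theorem~\ref{thm:1}). Both are hereditary: $\mathcal{B}$ because induced subgraphs of bipartite graphs are bipartite, and $\mathcal{C}$ because deleting vertices preserves planarity, does not increase maximum degree, and does not decrease girth. Consequently, whenever $H\notin\mathcal{B}$ we have $\mathcal{B}\subseteq\Free(H)$, lifting NP-hardness to $H$-free graphs; the analogous statement holds for $\mathcal{C}$. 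Note that $H\notin\mathcal{B}$ is implied by $H$ containing an induced $3K_1=\overline{C}_3$, while $H\notin\mathcal{C}$ is implied by $H$ being non-planar, having maximum degree at least $7$, or containing an induced cycle of length at most $5$.

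The plan is then to verify that every $H$ outside the good list meets at least one of these conditions. If $H$ contains an induced $3K_1$, Theorem~\ref{thm:2} applies. Otherwise $H$ is $3K_1$-free; if in addition $H$ contains an induced $K_3$, $C_4$, or $C_5$, then $H\notin\mathcal{C}$ and Theorem~\ref{thm:1} applies. So assume $H$ is $3K_1$-free with girth at least $6$ (or acyclic). By Ramsey's theorem ($R(3,3)=6$), a $3K_1$-free graph on at least $6$ vertices contains an induced $K_3$, contradicting girth at least $6$; hence $|V(H)|\le 5$. But a graph on at most $5$ vertices with girth at least $6$ must be acyclic, so $H$ is a forest. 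Since a forest on $n$ vertices is bipartite and therefore has independence number at least $\lceil n/2\rceil$, the $3K_1$-freeness forces $|V(H)|\le 4$. A brief enumeration of forests on at most four vertices with independence number at most $2$ recovers precisely the seven graphs in the good list, so this residual case is vacuous and every bad $H$ has been handled.

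The argument is largely combinatorial bookkeeping once the two NP-hardness results of Section~\ref{sec:NP} are in hand. The only real obstacle is choosing the two tough classes so that together they cover all bad $H$: $\mathcal{B}$ traps every $H$ with an independent triple, $\mathcal{C}$ traps every $H$ with a short cycle, high degree, or non-planarity, and the Ramsey bound squeezes all remaining candidates onto at most four vertices, where direct enumeration finishes the proof. It is worth pointing out that the boundary class $\Qs{\mathcal{S}}$ established in Theorem~\ref{thrm:boundary} is not invoked here: although it certifies NP-hardness for a broader family of finitely defined classes via Theorem~\ref{thm:boundary}, the two specific tough classes of Section~\ref{sec:NP} already suffice for the monogenic dichotomy.
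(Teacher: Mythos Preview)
Your proof is correct and rests on the same two NP-hardness results (Theorems~\ref{thm:1} and~\ref{thm:2}) and the same two polynomial results (Proposition~\ref{obs:1} and Theorem~\ref{thm:poly}) as the paper, but your case decomposition for the hard direction is organised differently. The paper first splits on whether $H$ contains a cycle; if it does, the cycle length determines which of the two tough classes applies, and if $H$ is a forest, the paper proceeds through an explicit structural cascade (claw present or not, number of path components, sum of their lengths). You instead pivot on whether $H$ contains $3K_1$: if so, co-bipartite graphs handle it; if not, a short Ramsey argument ($R(3,3)=6$) together with the girth constraint forces $|V(H)|\le 5$, then acyclicity, then $|V(H)|\le 4$ via the bipartite independence bound, and a direct enumeration finishes.

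Your route is more uniform and avoids the multi-level case split of the paper; the Ramsey step and the $\alpha\ge\lceil n/2\rceil$ bound compress what the paper does by hand. The paper's approach, on the other hand, makes explicit which structural feature of each bad $H$ is responsible for hardness (short cycle, claw, too many components, long component), which is informative even if less economical. Both arguments are complete; neither needs the boundary-class machinery of Section~\ref{sec:boundary}, as you correctly note.
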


\begin{proof}
Assume  $H$ contains a cycle $C_k$, then the problem is NP-hard for $H$-free graphs  
\begin{itemize}
\item either by Theorem~\ref{thm:1} if $k\le 5$, because in this case the class of $H$-free graphs contains all 
graphs of girth at least 6,  
\item or by Theorem~\ref{thm:2} if $k\ge 6$, because in this case the class of $H$-free graphs contains 
the class of $\overline{K}_3$-free graphs and hence all complements of bipartite graphs.  
\end{itemize}
Assume now that $H$ is acyclic, i.e. a forest. If it contains a claw (a star whose center has degree 3), 
then the problem is NP-hard for $H$-free graphs by Theorem~\ref{thm:2}, because in this case the class of $H$-free graphs 
contains all $\overline{K}_3$-free graphs and hence all complements of bipartite graphs.

If $H$ is a claw-free forest, then every connected component of $H$ is a path. If $H$ contains at least three connected components, 
then the class of $H$-free graphs contains all $\overline{K}_3$-free graphs, in which case the problem is NP-hard by Theorem~\ref{thm:2}.
Assume $H$ consists of two connected components $P_k$ and $P_t$. 
\begin{itemize}
\item If $k+t\ge 5$, then the class of $H$-free graphs contains 
all $\overline{K}_3$-free graphs and hence the problem is NP-hard by Theorem~\ref{thm:2}. 
\item If $k+t\le 3$, then the class of $H$-free graphs is a subclass of $P_4$-free graphs and 
hence the problem can be solved in polynomial time in this class by Proposition~\ref{obs:1}. 
\item If $k+t=4$, then 
\begin{itemize}
\item either $k=t=2$,  in which case $H=2K_2$ and hence the problem can be solved in polynomial 
time by Theorem~\ref{thm:poly}, 
\item or $k=4$ and $t=0$, in which case $H=P_4$ and hence the problem can be solved 
in polynomial time by Proposition~\ref{obs:1}, 
\item or $k=3$ and $t=1$, in which case the class of $H$-free graphs 
contains all $\overline{K}_3$-free graphs and hence the problem is NP-hard by Theorem~\ref{thm:2}.
\end{itemize}
\end{itemize} 

\end{proof}

\section{Conclusion}
\label{sec:con}

In this paper, we identified the first boundary class for the {\sc upper dominating set} problem
and proved that the problem admits a dichotomy for monogenic classes, i.e. classes defined by 
a single forbidden induced subgraph. We conjecture that this dichotomy can be extended to all 
finitely defined classes. By Theorem~\ref{thm:boundary}, the problem is NP-hard in a finitely defined 
class $X$ if and only if $X$ contains a boundary class for the problem. In the present paper, we made
the first step towards the description of the family of boundary classes for {\sc upper domination}.
Since the problem is NP-hard in the class of triangle-free graphs (Theorem~\ref{thm:1}), there must exist 
at least one more boundary class for the problem. We believe that this is again the class $\cal S$ of tripods. This class 
was proved to be boundary for many algorithmic graph problems, which is typically done by showing that 
a problem is NP-hard in the class ${\cal Z}_k$ for any fixed value of $k$. We believe that the same is true for
{\sc upper domination}, but this question remains open. 

One more open question deals with Lemma~\ref{lem:Q} of the present paper. It shows a relationship 
between {\sc minimum dominating set} in general graphs and {\sc upper dominating set} in co-bipartite graphs. 
For the first of these problems, three boundary classes are available \cite{AKL04}. One of them
was transformed in the present paper to a boundary class for {\sc upper domination}. Whether the other two can 
also be transformed in a similar way is an interesting open question, which we leave for future research.

\section*{Acknowledgements}
Vadim Lozin and Viktor Zamaraev gratefully acknowledge support from EPSRC, grant EP/L020408/1.

Part of this research was carried out when Vadim Lozin was visiting the King Abdullah University of Science and Technology (KAUST).
This author thanks the University for hospitality and stimulating research environment.

\end{document}